\newcommand{\revisedtwo}[1]{%
\ifx\highlightrevisionstwo\undefined{#1}%
\else\textcolor{red}{#1}%
\fi}
\newcommand{\erasedtwo}[1]{%
\ifx\showerasedtwo\undefined%
\else{\sout{#1}}%
\fi}
\newcommand{\reviewercommentnumtwo}[1]{%
\ifx\showreviewercommentnumtwo\undefined%
\else{[\bf{#1}] }%
\fi}
\newcommand{\revised}[1]{%
\ifx\highlightrevisions\undefined{#1}%
\else\textcolor{red}{#1}%
\fi}
\newcommand{\reviewercommentnum}[1]{%
\ifx\showreviewercommentnum\undefined%
\else{[\bf{#1}] }%
\fi}
\newcommand{\erased}[1]{%
\ifx\showerased\undefined%
\else{\sout{#1}}%
\fi}
 \newcommand{\noise}{\mathcal{Z}}
 \newcommand{\Reg}{\Lambda}
 \newcommand\numberthis{\addtocounter{equation}{1}\tag{\theequation}}
\newcommand{\nflops}{\texttt{\#flops}}
\newcommand{\fro}[1]{\| #1 \|_\text{fro}}
\newcommand{\nuc}[1]{\| #1 \|_\text{nuc}}
\newcommand{\msv}[1]{\| #1 \|_\text{msv}}
\newcommand{\normi}[2]{ \| #1 \|_{(#2)}}
\newcommand{\dnormi}[2]{ \| #1 \|_{(#2)}^*}
\newcommand{\rank}{r}
\newcommand{\block}{b}
\newcommand{\complex}{\mathbb{C}}
\newcommand{\inner}[2]{ \langle  #1 \, , #2 \rangle }
\newcommand{\transpose}[1] { #1^\top }
\newcommand{\proj}{ \mathcal{P} }
\newcommand{\svt}{ \textsc{SVT} }
\newcommand{\bsvt}{ \textsc{BlockSVT} }
\newcommand{\atom}{  E }
\newcommand{\dual}{ Q }
\newcommand{\epsi}[1]{  \dnormi{ \epsilon_{#1} }  {#1} }
\newcommand{\levels}{  L }
\newcommand{\obj}[1]{  \sum_{i=1}^{\levels}  \lambda_i \normi{ #1 }{i} }
\newcommand{\sumb}[1]{ \sum_{\block \in \part_i} #1}
\newcommand{\sumi}[1]{ \sum_{i=1}^{\levels} #1 }
\newcommand{\seti}[1]{ \{ #1 \}_{i=1}^\levels }
\newcommand{ \part} {P}
\newcommand{\Ub}{U_\block}
\newcommand{\Sb}{S_\block}
\newcommand{\VTb}{V^\top_\block}
\newcommand{\Vb}{V_\block}
\newcommand{\Rb}{R_\block}
\theoremstyle{definition}
\newtheorem{theorem}{Theorem}[section]
\newtheorem{prop}[theorem]{Proposition}
\newtheorem{lemma}[theorem]{Lemma}
\begin{document}

\title{Beyond Low Rank + Sparse: \\ Multi-scale Low Rank Matrix Decomposition}
\author{
    \IEEEauthorblockN{Frank~Ong, and~Michael~Lustig}\\
    	\thanks{F. Ong and M. Lustig are with the Department of Electrical Engineering and Computer Sciences, University of California, Berkeley, CA 94709 USA (e-mail: frankong@berkeley.edu and mlustig@eecs.berkeley.edu). This work is supported by NIH grants R01EB019241, R01EB009690, and
P41RR09784, Sloan research fellowship, Okawa research grant and NSF GRFP.}
}

\bibliographystyle{IEEEtran.bst}

\maketitle


\begin{abstract}

\revised{\erased{Low rank methods allow us to capture globally correlated components within matrices.}} \revisedtwo{\erasedtwo{The recent low rank + sparse matrix decomposition \revised{\erased{further}}enables us to extract sparse entries along with globally correlated low rank components. In this paper, we present a natural generalization}} We present a natural generalization of the recent low rank + sparse matrix decomposition and consider the decomposition of matrices into components of multiple scales. Such decomposition is well motivated in practice as data matrices often exhibit local correlations in multiple scales. Concretely, we propose a multi-scale low rank modeling that represents a data matrix as a sum of block-wise low rank matrices with increasing scales of block sizes. We then consider the inverse problem of decomposing the data matrix into its multi-scale low rank components and approach the problem via a convex formulation. Theoretically, we show that under various incoherence conditions, the convex program recovers the multi-scale low rank components \revised{either exactly or approximately}. Practically, we provide guidance on selecting the regularization parameters and incorporate cycle spinning to reduce blocking artifacts.  Experimentally, we show that the multi-scale low rank decomposition provides a more intuitive decomposition than conventional low rank methods and demonstrate its effectiveness in four applications, including illumination normalization for face images, motion separation for surveillance videos, multi-scale modeling of the dynamic contrast enhanced magnetic resonance imaging and collaborative filtering exploiting age information.

\end{abstract}

\begin{IEEEkeywords}
Multi-scale Modeling, Low Rank Modeling, Convex Relaxation, Structured Matrix, Signal Decomposition
\end{IEEEkeywords}

\begingroup
\let\clearpage\relax

\section{Introduction}

Signals and systems often exhibit different structures at different scales. Such multi-scale structure has inspired a wide variety of multi-scale signal transforms, such as wavelets~\cite{Mallat:1989be}, curvelets~\cite{Candes:2006bs} and multi-scale pyramids~\cite{Simoncelli:1992fh}, that can represent natural signals compactly. Moreover, their ability to compress signal information into a few significant coefficients has made multi-scale signal transforms valuable beyond compression and are now commonly used in signal reconstruction applications, including denoising~\cite{Donoho:2006el}, compressed sensing~\cite{Donoho:2006ci, Candes:2006eq}, and signal separation~\cite{Donoho:2001hn,Starck:2005be,Chen:2006hm}. By now, multi-scale modeling is associated with many success stories in engineering applications.

On the other hand, low rank methods are commonly used instead when the signal subspace needs to be estimated as well. In particular, low rank methods have seen great success in \revised{applications, such as biomedical imaging~\cite{Liang:2007gf}, face recognition~\cite{Basri:2003ie} and collaborative filtering~\cite{Goldberg:2001hj}, that require} exploiting the global data correlation to recover the signal subspace and compactly represent the signal at the same time. Recent convex relaxation techniques~\cite{Fazel:2002tc} have further enabled low rank model to be adaptable to \revised{\erased{practical applications} various signal processing tasks}, including matrix completion~\cite{Candes:2009kj},  system identification~\cite{Fazel:2013dc} and phase retrieval~\cite{Candes:2013ka}, making low rank methods ever more attractive.

In this paper, we present a multi-scale low rank matrix decomposition method that incorporates multi-scale structures with low rank methods. The additional multi-scale structure allows us to obtain a more accurate and compact signal representation than conventional low rank methods whenever the signal exhibits multi-scale structures (see Figure~\ref{fig:hanning}).  To capture data correlation at multiple scales, we model our data matrix as a sum of block-wise low rank matrices with increasing scales of block sizes (more detail in Section~\ref{sec:model}) and consider the inverse problem of decomposing the matrix into its multi-scale components. Since we do not assume an explicit basis model, multi-scale low rank decomposition also prevents modeling errors or basis mismatch that are commonly seen with multi-scale signal transforms. In short, our proposed multi-scale low rank decomposition inherits the merits from both multi-scale modeling and low rank matrix decomposition.

\begin{figure*}[!ht]
\begin{center}

\ifdefined \single
\includegraphics[width=1.1\linewidth]{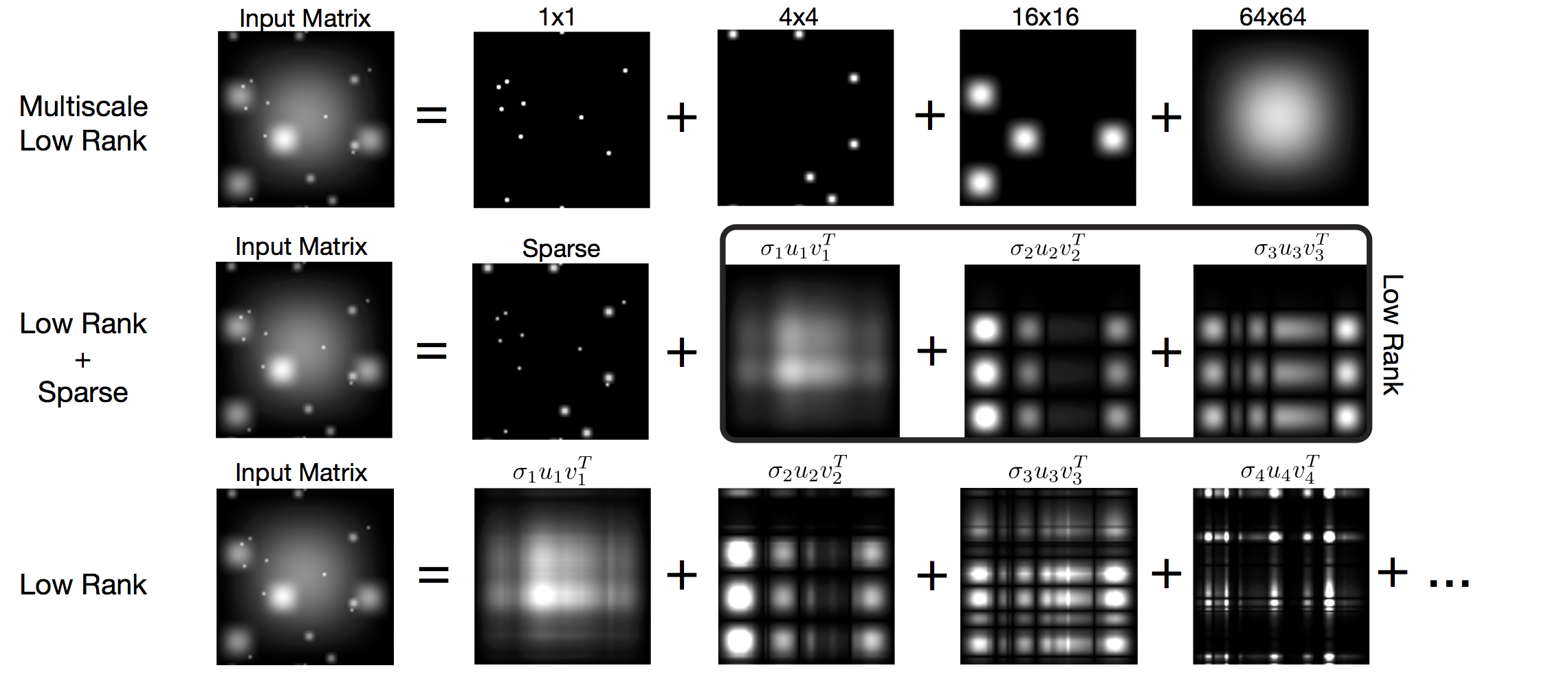}
\else
\includegraphics[width=0.9\linewidth]{figures/hanning.png}
\fi

\caption{An example of our proposed multi-scale low rank decomposition compared with other low rank methods. Each blob in the input matrix is a rank-1 matrix constructed from an outer product of hanning windows. Only the multi-scale low rank decomposition exactly separates the blobs to their corresponding scales and represents each blob as compactly as possible.}
\label{fig:hanning}
\end{center}
\end{figure*}

Leveraging recent convex relaxation techniques, we propose a convex formulation to perform the multi-scale low rank matrix decomposition. We provide a theoretical analysis in Section~\ref{sec:theory} that extends the rank-sparsity incoherence results in Chandrasekaran et al.~\cite{Anonymous:2011kn}. We show that  the proposed convex program decomposes the data matrix into its multi-scale components exactly \revised{under a deterministic incoherence condition}. \revised{In addition, in Section~\ref{sec:noisy}, we provide a theoretical analysis on approximate multi-scale low rank matrix decomposition in the presence of additive noise that extends the work of Agarwal et al.~\cite{Agarwal:2012gc}.}

A major component of this paper is to introduce the proposed multi-scale low rank decomposition with emphasis on its practical performance and applications. We provide practical guidance on choosing regularization parameters for the convex method in Section~\ref{sec:lambda} and describe heuristics to perform cycle spinning~\cite{Coifman:1995ji} to reduce blocking artifacts in Section~\ref{sec:ti}. In addition, we applied the multi-scale low rank decomposition on real datasets and considered four applications of the multi-scale low rank decomposition: illumination normalization for face images, motion separation for surveillance videos, compact modeling of the dynamic contrast enhanced magnetic resonance imaging and collaborative filtering exploiting age information. (See Section~\ref{sec:exp} for more detail). Our results show that the proposed multi-scale low rank decomposition provides intuitive multi-scale decomposition and compact signal representation for a wide range of applications.

\subsection*{ Related work }

Our proposed multi-scale low rank matrix decomposition draws many inspirations from recent developments in rank minimization~\cite{Fazel:2002tc,Recht:2010dx,Agarwal:2012gc,Xu:2010uc,Candes:2010jb,Candes:2009kj,Recht:2010ht, Hsu:2011di}. In particular, the multi-scale low rank matrix decomposition is a generalization of the low rank + sparse decomposition proposed by Chandrasekaran et al.~\cite{Anonymous:2011kn} and Cand{\`e}s et al.~\cite{Candes:2011bd}. Our multi-scale low rank convex formulation also fits into the convex demixing framework proposed by McCoy et al.~\cite{McCoy:2013vc,McCoy:2014ir,McCoy:ky}, who studied the problem of demixing components via convex optimization. The proposed multi-scale low rank decomposition can be viewed as a concrete and practical example of the convex demixing problem. \revisedtwo{\reviewercommentnumtwo{R1.3}However, their theoretical analysis assumes that each component is randomly oriented with respect to each other, and does not apply to our setting, where we observe the direct summation of the components.} Bakshi et al.~\cite{Bakshi:1998il} proposed a multi-scale principal component analysis by applying principal component analysis on wavelet transformed signals, but such method implicitly constrains the signal to lie on a predefined wavelet subspace. Various multi-resolution matrix factorization techniques~\cite{Kondor:2014vr,Kakarala:2001fk} were proposed to greedily peel off components of each scale by recursively applying matrix factorization. One disadvantage of these factorization methods is that it is not straightforward to incorporate them with other reconstruction problems as models. \revised{\reviewercommentnum{R3}Similar multi-scale modeling using demographic information was also used in collaborative filtering described in Vozalis and Margaritis~\cite{VOZALIS:2007ki}.}



\section{Multi-scale Low Rank Matrix Modeling}
\label{sec:model} 

\revised{\reviewercommentnum{R1.5}\erased{Low rank matrix modeling arises frequently in a wide variety of applications such as biomedical imaging, face recognition and collaborative filtering. In particular, when multiple copies of similar data $\{y_i\}_{i=1}^N$ are observed, the data matrix $Y$ constructed as follows is often low rank:}}

\revised{\erased{While low rank modeling captures the notion of data similarity, it completely ignores any locality information that may be present in the data matrix. For example, in video processing, each data vector $y_i$ represents a video frame and it is intuitive that each video frame should be more correlated with nearby frames than faraway frames. Hence the block matrix rank constructed from the data matrix is much lower than the global matrix rank. Such local low rank structure has been observed in various other applications, in particular in imaging applications. Since natural data are naturally correlated in multiple scales, a multi-scale low rank modeling is intuitively a more appropriate modeling.}}

\begin{figure}[!ht]
\begin{center}

\ifdefined \single
	\includegraphics[width=0.9\linewidth]{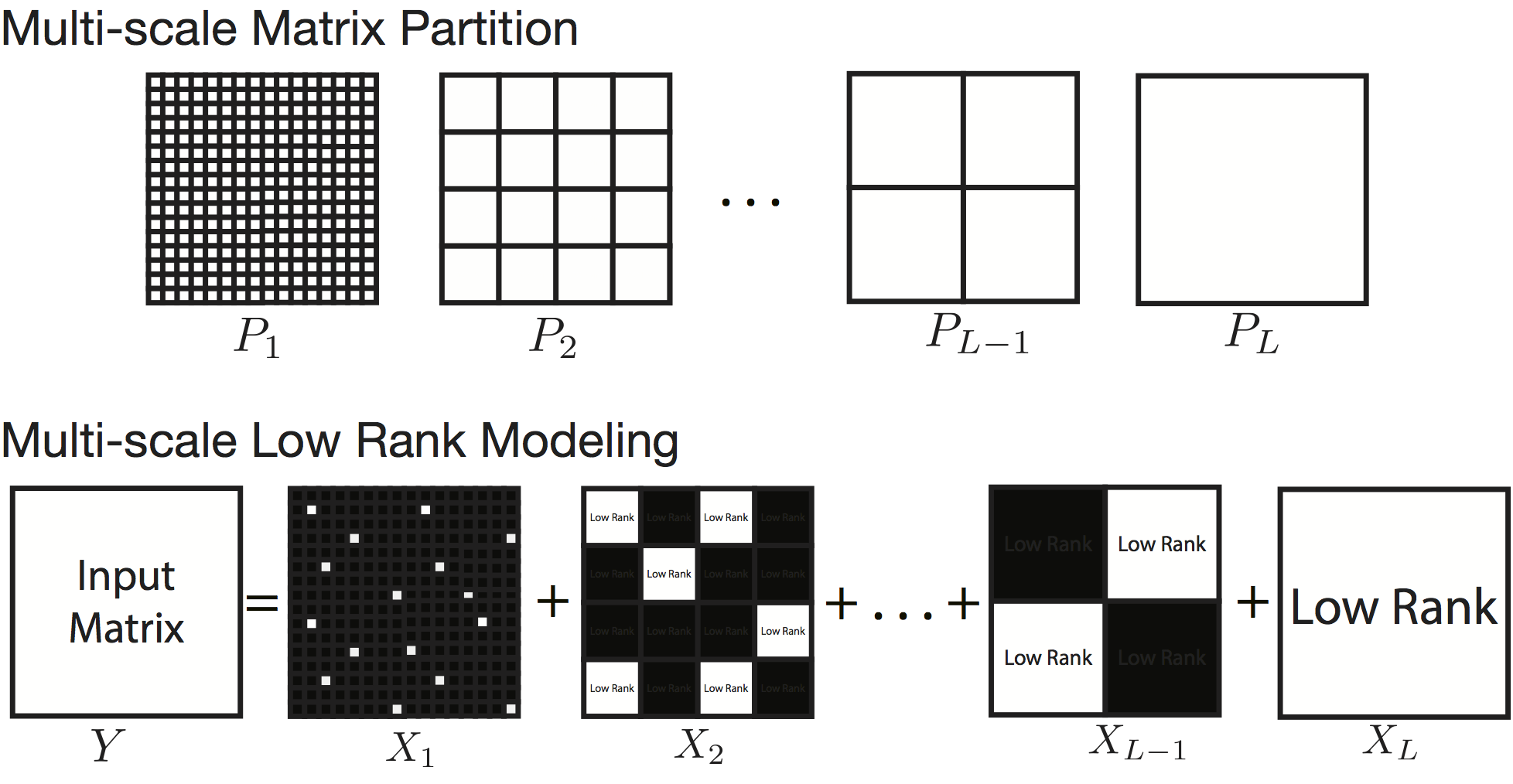}
\else
	\includegraphics[width=\linewidth]{figures/blocks.png}
\fi
\caption{Illustration of a multi-scale matrix partition and its associated multi-scale low rank modeling. Since the zero matrix is a matrix with the least rank, our multi-scale modeling naturally extends to sparse matrices as $1 \times 1$ low rank matrices.}
\label{fig:blocks}
\end{center}
\end{figure}

\revisedtwo{In this section, we describe the proposed multi-scale low rank matrix modeling in detail.} To concretely formulate the model, we assume that we can partition the data matrix of interest $Y$ into different scales. Specifically, we assume that we are given a multi-scale partition $\seti{ \part_i }$ of the indices of an $M \times N$ matrix, where each block $\block$ in $\part_i$ is an order magnitude larger than the blocks in the previous scale $\part_{i-1}$. Such multi-scale partition can be naturally obtained in many applications. For example in video processing, a multi-scale partition can be obtained by decimating both space and time dimensions. Figures~\ref{fig:blocks} and~\ref{fig:blocks2} provide two examples of a multi-scale partition, the first one with decimation along two dimensions and the second one with decimation along one dimension. In Section~\ref{sec:exp}, we provide practical examples on creating these multi-scale partitions for different applications.

To easily transform between the data matrix and the block matrices, we then consider a block reshape operator $R_\block (X)$  that extracts a block $\block$ from the full matrix $X$ and reshapes the block into an $m_i \times n_i$ matrix (Figure~\ref{fig:Rop}) \revised{\reviewercommentnum{R1.8}, where $m_i \times n_i$ is the $i$th scale block matrix size determined by the user}. 

\begin{figure}[!ht]
\begin{center}

\ifdefined \single
	\includegraphics[width=0.9\linewidth]{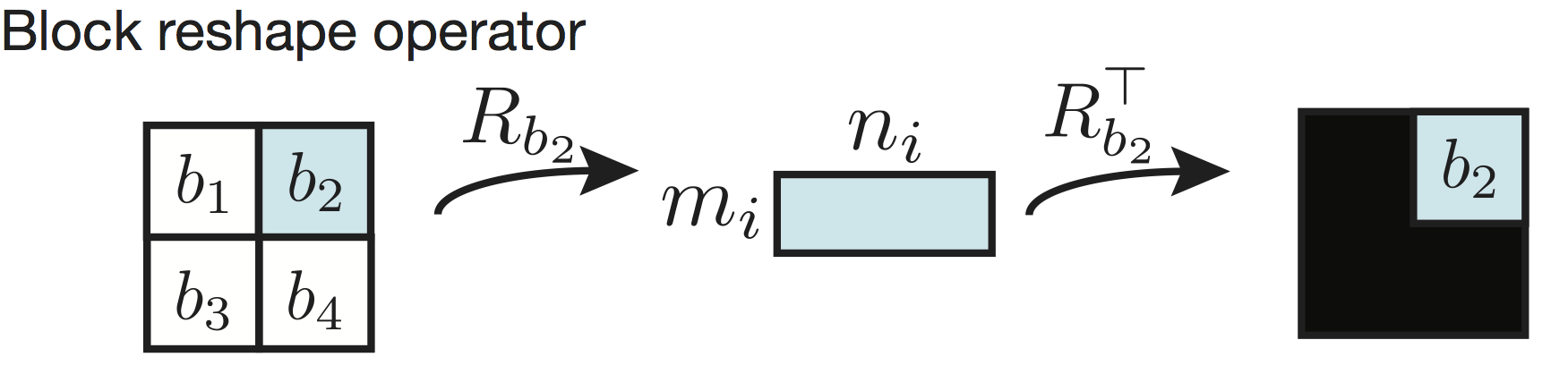}
\else
	\includegraphics[width=\linewidth]{figures/Rop.png}
\fi
\caption{Illustration of the block reshape operator $R_{\block}$. $R_\block$ extracts block $\block$ from the full matrix and reshapes it into an $m_i \times n_i$ matrix. Its adjoint operator $R_\block^\top$ takes an $m_i \times n_i$ matrix and embeds it into a full-size zero matrix.}
\label{fig:Rop}
\end{center}
\end{figure}

Given an $M \times N$ input matrix $Y$ and its corresponding multi-scale partition and block reshape operators, we propose a multi-scale low rank modeling that models the $M \times N$ input matrix $Y$ as a sum of matrices $\sumi{X_i}$, in which each $X_i$ is block-wise low rank with respect to its partition $P_i$, that is,
\begin{align*}
				Y  &= \sumi X_i \\
				X_i &=  \sumb \Rb^\top(  \Ub \Sb \VTb )
\end{align*}
 where $\Ub$, $\Sb$, and $\Vb$ are matrices with sizes $m_i \times \rank_b$, $\rank_b \times \rank_b$ and $n_i \times \rank_b$ respectively and form the rank-$\rank_b$ reduced SVD of $\Rb( X_i )$. Note that when the rank of the block matrix $\Rb({ X_i })$ is zero, we have $ \{\Ub , \Sb, \Vb \}$ as empty matrices, which do not contribute to $X_i$. Figure~\ref{fig:blocks} and \ref{fig:blocks2} provide illustrations of two kinds of modeling with their associated partitions.

\begin{figure}[!ht]
\begin{center}

\ifdefined \single
	\includegraphics[width=0.9\linewidth]{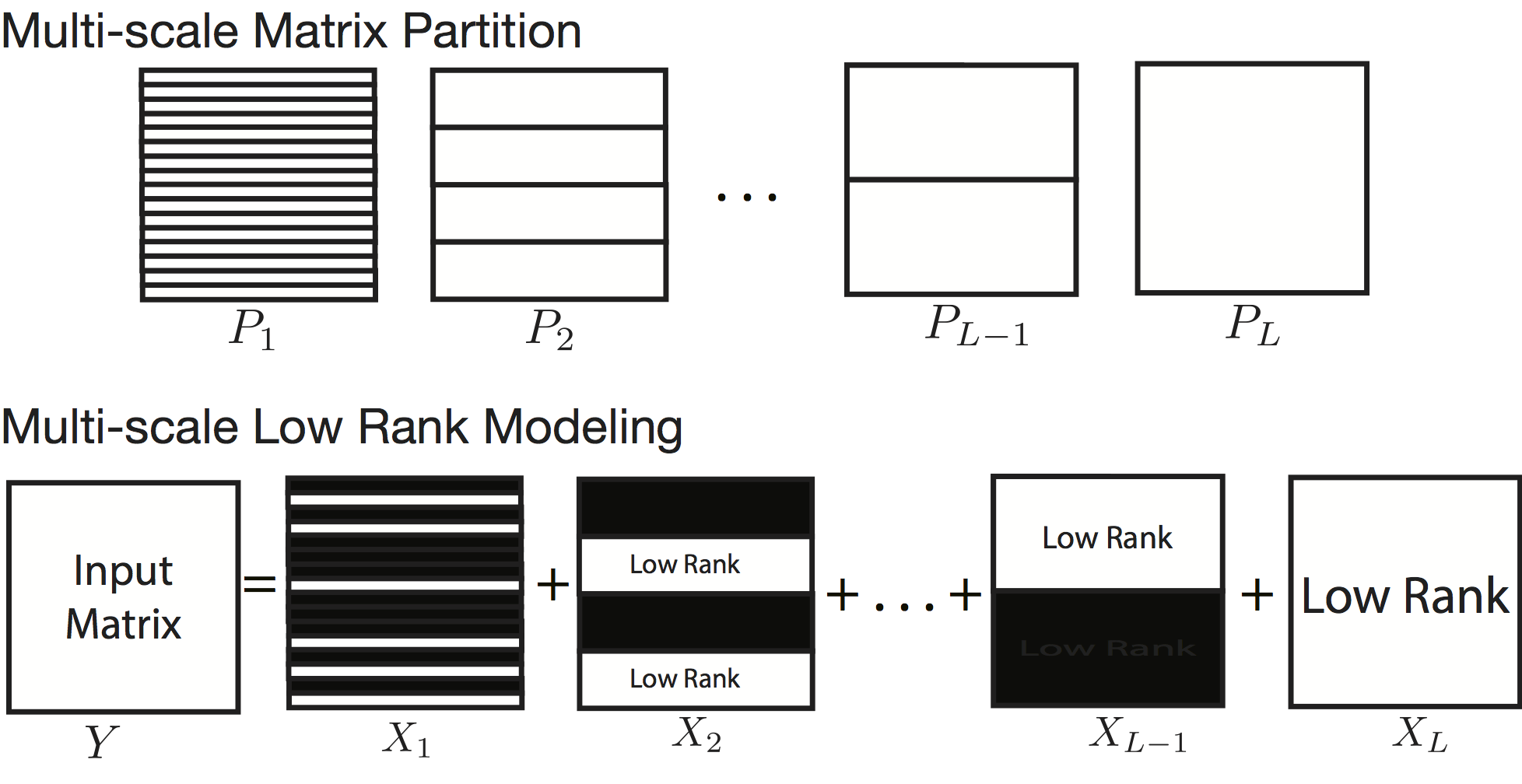}
\else
	\includegraphics[width=\linewidth]{figures/blocks2.png}
\fi

\caption{Illustration of another multi-scale matrix partition and its associated multi-scale low rank modeling. Here, only the vertical dimension of the matrix is decimated. Since a $1 \times N$ matrix is low rank if and only if it is zero, our multi-scale modeling naturally extends to group sparse matrices.}
\label{fig:blocks2}
\end{center}
\end{figure}

By constraining each block matrices to be of low rank, the multi-scale low rank modeling captures the notion that some nearby entries are more similar to each other than global entries in the data matrix. We note that the multi-scale low rank modeling is a generalization of the low rank + sparse modeling proposed by Chandrasekaren et al.~\cite{Anonymous:2011kn} and Cand{\`e}s et al.~\cite{Candes:2011bd}. In particular, the low rank + sparse modeling can be viewed as a $2$-scale low rank modeling, in which the first scale has block size $1 \times 1$ and the second scale has block  size $M \times N$. By adding additional scales between the sparse and globally low rank matrices, the multi-scale low rank modeling can capture locally low rank components that would otherwise need many coefficients to represent for low rank + sparse.

Given a data matrix $Y$ that fits our multi-scale low rank model, our goal is to decompose the data matrix $Y$ to its multi-scale components $\seti{X_i}$. The ability to recover these multi-scale components is beneficial for many applications and allows us to, for example, extract motions at multiple scales in surveillance videos (Section~\ref{sec:exp}). Since there are many more parameters to be estimated than the number of observations, it is necessary to impose conditions on $X_i$. In particular, we will exploit the fact that each block matrix is low rank via a convex program, which will be described in detail in section~\ref{sec:cvx}.

\subsection{Multi-scale low rank + noise}
\label{ssec:robust}
Before moving to the convex formulation, we note that our multi-scale matrix modeling can easily account for data matrices that are corrupted by additive white Gaussian noise. Under the multi-scale low rank modeling, we can think of the additive noise matrix as the largest scale signal component and is unstructured in any local scales. Specifically if we observe instead the following
\begin{align*}
		Y = \sumi X_i + X_{\noise} 
		\numberthis \label{noise}
\end{align*}
where $X_{\noise}$ is an independent  and identically distributed Gaussian noise matrix. Then we can define a reshape operator $R_{\noise}$ that reshapes the entire matrix into an $MN \times 1$ vector and the resulting matrix fits exactly to our multi-scale low rank model with $L+1$ scales. This incorporation of noise makes our model flexible in that it automatically provides a corresponding convex relaxation, a regularization parameter for the noise matrix and allows us to utilize the same iterative algorithm to solve for the noise matrix. Figure~\ref{fig:hanning_noisy} provides an example of the noisy multi-scale low rank matrix decomposition.

\begin{figure}[!ht]
\begin{center}

\ifdefined \single
	\includegraphics[width=0.9\linewidth]{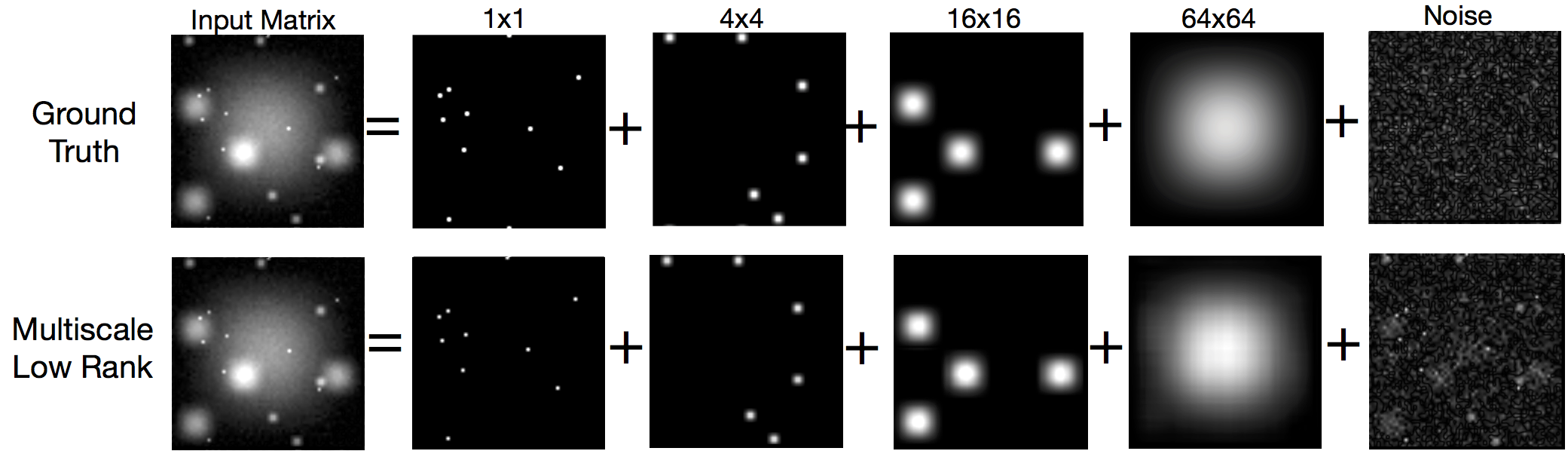}
\else
	\includegraphics[width=\linewidth]{figures/hanning_noisy.png}
\fi

\caption{An example of the multi-scale low rank decomposition in the presense of additive Gaussian noise by solving the convex program~\eqref{eq:cvx}.}
\label{fig:hanning_noisy}
\end{center}
\end{figure}

\section{Problem Formulation and Convex Relaxation}
\label{sec:cvx}

Given a data matrix $Y$ that fits the multi-scale low rank model, our goal is to recover the underlying multi-scale components $\seti{X_i}$ using the fact that $X_i$ is block-wise low rank. Ideally, we would like to obtain a multi-scale decomposition with the minimal block matrix rank and solve a problem similar to the following form:
\begin{equation*}
	\begin{aligned}
		& \underset{X_1,\hdots ,X_\levels }{\text{minimize}}
		& &  \sumi \sumb  \text{ rank} (\Rb ( X_i ) ) \\
		& \text{subject to}
		& & Y = \sumi X_i
	\end{aligned}
\end{equation*}

However, each rank minimization for each block is combinatorial in nature. In addition, it is not obvious whether the direct summation of ranks is a correct formulation as a $1$-sparse matrix and a rank-$1$ matrix should intuitively not carry the same cost. Hence, the above non-convex problem is not a practical formulation to obtain the multi-scale decomposition.

Recent development in convex relaxations suggests that rank minimization problems can often be relaxed to a convex program via nuclear norm relaxation~\cite{Fazel:2002tc, Recht:2010ht}, while still recovering the optimal solution to the original problem. In particular, Chandrasekaren et al.~\cite{Anonymous:2011kn} and Cand{\`e}s et al.,~\cite{Candes:2011bd} showed that a low rank + sparse decomposition can be relaxed to a convex program by minimizing a nuclear norm + $\ell 1$-norm objective as long as the signal constituents are incoherent with respect to each other. In addition, Cand{\`e}s et al.,~\cite{Candes:2011bd} showed that the regularization parameters for sparsity and low rank should be related by the square root of the matrix size. Hence, there is hope that, along the same line, we can perform the multi-scale low rank decomposition exactly via a convex formulation.

Concretely, let us define $\nuc{\cdot} $ to be the nuclear norm, the sum of singular values, and $\msv{\cdot} $ be the maximum singular value norm. For each scale $i$, we consider the block-wise nuclear norm to be the convex surrogate for the block-wise ranks and define $\normi{ \cdot }{i}$ the block-wise nuclear norm for the $i$th scale as
\begin{align*}
		\normi{ \cdot} {i}  = \sum_{\block \in \part_i}  \nuc{ R_{\block} ( \cdot ) } 
\end{align*}
Its associated dual norm $\| \cdot \|_{(i)}^*$ is then given by	
\begin{align*}
		\| \cdot \|_{(i)}^* = \max_{\block \in \part_i}  \msv{ R_{\block} ( \cdot ) }
\end{align*}
which is the maximum of all block-wise maximum singular values.

We then consider the following convex relaxation for our multi-scale low rank decomposition problem:
\begin{equation}
	\begin{aligned}
		& \underset{X_1,\hdots ,X_\levels }{\text{minimize}}
		& &  \sumi \lambda_i \| X_i \|_{(i)}\\
		& \text{subject to}
		& & Y = \sumi X_i
	\end{aligned}
	\label{eq:cvx}
\end{equation}
where $\seti{ \lambda_i }$ are the regularization parameters and their selection will be described in detail in section~\ref{sec:lambda}.

Our convex formulation is a natural generalization of the low rank + sparse convex formulation~\cite{Anonymous:2011kn,Candes:2011bd}. With the two sided matrix partition (Fig.~\ref{fig:blocks}), the nuclear norm applied to the $1 \times 1$ blocks becomes the element-wise $\ell 1$-norm and the norm for the largest scale is the nuclear norm. With the one sided matrix partition (Fig.~\ref{fig:blocks2}), the nuclear norm applied to $1 \times N$ blocks becomes the group-sparse norm and can be seen as a generalization of the group sparse + low rank decomposition~\cite{Xu:2010uc}. If we incorporate additive Gaussian noise in our model as described in Section~\ref{ssec:robust}, then we have a nuclear norm applied to an $MN \times 1$ vector, which is equivalent to the Frobenius norm. 

One should hope that the theoretical conditions from low rank + sparse can be generalized rather seamlessly to the multi-scale counterpart. Indeed, in Section~\ref{sec:theory}, we show that the core theoretical guarantees in the work of Chandrasekaren et al.~\cite{Anonymous:2011kn} on exact low rank + sparse decomposition can be generalized to the multi-scale setting. \revised{In section~\ref{sec:noisy}, we show that the core theoretical guarantees in the work of Agarwal et al.~\cite{Agarwal:2012gc} on noisy matrix decomposition can be generalized to the multi-scale setting as well to provide approximate decomposition guarentees.} \revised{\erased{At a high level, we can show that as long as the row and column spaces of each block matrices are not coherent, then we can select appropriate regularization parameters $\seti{\lambda_i}$ such that the resulting convex program recovers the multi-scale signal components $\seti{X_i}$ from the data matrix $Y$.}}


\revisedtwo{\erasedtwo{Theoretical justification of the multi-scale low rank convex formulation can also be obtained via the convex demixing framework introduced in McCoy et al. In their work, McCoy et al. studied the problem of demixing signal constituents using a convex program and modeled the signal components to be randomly oriented with respect to each other. While the random orientation assumption is not satisfied in our setting \revised{as we observe the summation of the signal components}, McCoy et al. provided strong guarantees on when the demixing succeeds and predicted a phase transition phenomenon. Since the multi-scale low rank decomposition fits the convex demixing framework, it also enjoys the same theoretical guarantees whenever the random orientation approximation is appropriate.}}

\section{Guidance on Choosing Regularization Parameters }
\label{sec:lambda}

In this section, we provide \revised{practical} guidance on selecting the regularization parameters $\seti{\lambda_i}$. Selecting the regularization parameters $\seti{\lambda_i}$ is crucial for the convex decomposition to succeed, both theoretically and practically. While theoretically we can establish criteria on selecting the regularization parameters (see Section~\ref{sec:theory} and \ref{sec:noisy}), such parameters are not straightforward to calculate in practice as it requires properties of the signal components $\seti{X_i}$  \em before \em the decomposition.

To select the regularization parameters $\seti{\lambda_i}$ \revised{in practice}, we follow the suggestions from Wright et al.~\cite{Anonymous:2013fx} and Fogel et al.~\cite{Foygel:cl}, and set each regularization parameter $\lambda_i$ to be the Gaussian complexity of each norm $ \normi{\cdot}{i}$, which is defined as the expectation of the dual norm of random Gaussian matrix:
\begin{equation}
	\begin{aligned}
	\lambda_i \sim E [ \| G \|_{(i)}^* ] 
	\end{aligned}
	\label{eq:gauss}
\end{equation}
where  $\sim$ denotes equality up to some constant and $G$ is a unit-variance independent and identically distributed random Gaussian matrix.

The resulting expression for the Gaussian complexity is the maximum singular value of a random Gaussian matrix, which has been studied extensively \revised{\erased{in}by} Bandeira and Handel~\cite{Bandeira:2014wk}. The recommended regularization parameter for scale $i$ is given by
\begin{equation}
\boxed{
	\begin{aligned}
	\lambda_i \sim \sqrt{ m_i }  + \sqrt{ n_i } + \sqrt{ \log \left( \frac{ M N } { \max  (m_i, n_i) }  \right)}
	\end{aligned}
	\label{eq:lambda}
	}
\end{equation}
For the sparse matrix scale with $1 \times 1$ block size, $\lambda_i \sim \sqrt{ \log  (   M N ) }$ and for the globally low rank scale with $M \times N$ block size, $\lambda_i \sim \sqrt{ M }  + \sqrt{ N }  $. Hence this regularization parameter selection is consistent with the ones recommended for low rank + sparse decomposition by Cand{\`e}s et al.~\cite{Candes:2011bd}, \revised{up to a $\log$ factor}. In addition, for the noise matrix with $MN \times 1$ block size, $\lambda_i \sim \sqrt{ M N }$, which has similar scaling as in square root LASSO~\cite{Belloni:2011wd}. In practice, we found that the suggested regularization parameter selection allows exact multi-scale decomposition when the signal model is matched (for example Figure~\ref{fig:hanning}) and provides visually intuitive decomposition for real datasets.

\revised{ 
For approximate multi-scale low rank decomposition in the presence of additive noise, some form of theoretical guarantees for the regularization selection can be found in our analysis in Section~\ref{sec:noisy}. In particular, we show that if the regularization parameters $\lambda_i$ is larger than the Gaussian complexity of $\dnormi{\cdot}{i}$ in addition to some ``spikiness" parameters, then the error between recovered decomposition and the ground truth $\seti{X_i}$ is bounded by the block-wise matrix rank.}

\revised{\erased{ one explanation for the above selection of the regularization parameters can be seen from the iterative procedure described in the Section~\ref{sec:solve}. As described in the iterative algorithm, the regularization parameters are the thresholds that are chosen to suppress interference from other scales. Now consider the special case that our input matrix $Y$ is a random Gaussian matrix. Then in order to have exact recovery, we would want to have all multi-scale components $\seti{X_i}$ to be zero except the one that represents the noise matrix. Hence, each threshold or regularization parameter, should be chosen such that it is greater than all the block-wise singular values of the Gaussian matrix. That is,
$\lambda_i  \ge \max_{\block \in \part_i}  \msv{ R_{\block} ( G ) }$
which is simply the dual norm of the Gaussian matrix. By results in concentration inequalities, we know that choosing $\lambda_i$ to be the expectation of the dual norm guarantees that the inequality is satisfied with high probability.}}



\section{Theoretical Analysis for Exact Decomposition}
\label{sec:theory}

In this section, we provide a theoretical analysis of the proposed convex formulation and show that if $\seti{ X_i}$ satisfies a deterministic incoherence condition, then the proposed convex formulation (\ref{eq:cvx}) recovers $\seti{X_i}$ from $Y$ exactly. 

Our analysis follows similar arguments taken by Chandrasekaren et al.~\cite{Anonymous:2011kn} on low rank + sparse decomposition and generalizes them to the proposed multi-scale low rank decomposition. Before showing our main result (Theorem~\ref{thm:main}), we first describe the subgradients of our objective function (Section \ref{ssec:sub}) and define a coherence parameter in terms of the block-wise row and column spaces (Section \ref{ssec:inc}).

\subsection{ Subdifferentials of the block-wise nuclear norms }
\label{ssec:sub}

To characterize the optimality of our convex problem, we first look at the subgradients of our objective function. We recall that for any matrix $X$ with $\{U, S, V \}$ as its reduced SVD representation, the subdifferential of $\nuc{\cdot}$ at $X$ is given by~\cite{Recht:2010ht, Watson:1992gq},
\begin{align*}
		 \partial \nuc{ X } &= \left \{     U V^\top + W : W \text{~and~} X \text{ have orthogonal row} \right . \\
  & \left . \phantom{\sum_d} \text{ and column spaces and~} \msv{ W } \le 1    \right \}
\end{align*}
Now recall that we define the block-wise nuclear norm to be $\normi{ \cdot} {i}  = \sum_{\block \in \part_i}  \nuc{ R_{\block} ( \cdot ) } $. Then using the chain rule and the fact that $R_\block(X_i) = \Ub \Sb \Vb^\top$, we obtain an expression for the subdifferential of $\normi{ \cdot } {i}$ at $X_i$ as follows:
\begin{align*}
		& \partial \normi{ X_i } {i}  = \left\{  \sumb  R_{\block}^\top (U_{\block} V_{\block}^\top + W_{\block}) : W_\block \text{~and~} R_{\block} (X_i) \text{ have }  \right . \\
                 & \left . \phantom{\sum_d}  \text{ orthogonal row and column spaces and~} \msv{ W_\block } \le 1    \right\} 
\end{align*}

To simplify our notation, we define $\atom_i  =  \sumb R_{\block}^\top(   U_{\block}  V_{\block}^\top  )$ and $T_i$ to be a vector space that contains matrices with the same block-wise row spaces or column spaces as $X_i$, that is,
\begin{align*}
		 T_i  &= \left\{   \sumb R_{\block}^\top(   U_{\block} X_{\block}^\top + Y_{\block} V_{\block}^\top  ): X_\block \in \complex^{n_i \times \rank_i}, Y_\block \in \complex^{ m_i \times \rank_i } \right\}
\end{align*}
\revised{\reviewercommentnum{1.8}where $m_i \times n_i$ is the size of the block matrices for scale $i$ and $\rank_b$ is the matrix rank for block $b$.}
Then, the subdifferential of each $\normi{ \cdot }{i}$ at $X_i$ can be compactly represented as,
\begin{align*}
		 \partial \normi{ X_i } {i}  = \left \{  \atom_i + W_{i} : W_i \in T^\perp_i  \text{~and~}  \dnormi{W_i}{i}  \le 1 \right \}
\end{align*}
\revised{We note that $E_i$ can be thought of as the ``sign" of the matrix $X_i$, pointing toward the principal components, and, in the case of the sparse scale, is exactly the sign of the entries.}

In the rest of the section, we will be interested in projecting a matrix $X$ onto $T_i$, which can be performed with the following operation:
\begin{align*}
		\proj_{T_i}(X)  = &\sumb  R_b^\top \left ( \Ub \Ub^\top R_b(X) + R_b(X) \Vb \Vb^\top \right .\\
		& \left . -  \Ub \Ub^\top R_b(X) \Vb \Vb^\top  \right )
\end{align*}
Similarly, to project a matrix $X$ onto the orthogonal complement of $T_i$, we can apply the following operation:
\begin{align*}
		\proj_{T_i^\perp}(X)  &=  \sumb   R_b^\top\left( (I - \Ub \Ub^\top) R_b(X) ( I - \Vb \Vb^\top )\right)
\end{align*}
where $I$ is an appropriately sized identity matrix.

\subsection{Incoherence}
\label{ssec:inc}

\revisedtwo{\erasedtwo{Recent developments in signal processing suggest that exact signal decomposition can be achieved as long as the components are incoherent with respect to each other.}} Following Chandrasekaren et al.~\cite{Anonymous:2011kn}, we consider a deterministic measure of incoherence through the block-wise column and row spaces of $X_i$. Concretely, we define the coherence parameter for the $j$th scale signal component $X_j$ with respect to the $i$th scale to be the following:
\begin{align*}
		\mu_{ij} =   \max_{  N_j \in T_j ,~ \dnormi{N_j}{j} \le 1 }  \dnormi{ N_j } {i}
	\numberthis
	\label{eq:inc_def}
\end{align*}

Using $\mu_{ij}$ as a measure of incoherence, we can quantitatively say that the $j$th scale signal component is incoherent with respect to the $i$th scale if $\mu_{ij}$ is small. In the case of low rank + sparse, Chandrasekaren et al.~\cite{Anonymous:2011kn} provides excellent description of the concepts behind the coherence parameters. We refer the reader to their paper for more detail.

\subsection{Main Result}
\label{ssec:thm}

Given the above definition of incoherence, the following theorem states our main result for exact multi-scale low rank decomposition:

\begin{theorem}\label{thm:main}

If we can choose regularization parameters $\seti{ \lambda_i }$ such that
\begin{align*}
			\sum_{j \ne i}  \mu_{ij}   \frac{\lambda_j}{ \lambda_i} < \frac{1}{2} , ~~~ \text{for } i = 1, \hdots, L
\end{align*}
then $\{ X_i \}_{i=1}^L$ is the unique optimizer of the proposed convex problem~(\ref{eq:cvx}).

In particular when the number of scales $L = 2$, the condition on $\{ \mu_{12}, \mu_{21} \}$ reduces to $\mu_{12} \mu_{21} < 1/4$ and the condition on $\{\lambda_1, \lambda_2\}$ reduces to $2 \mu_{12} <  \lambda_1 / \lambda_2   <  1 / (2 \mu_{21})$, which is in similar form as Theorem 2 in Chandrasekaren et al.~\cite{Anonymous:2011kn}.

\end{theorem}

\revisedtwo{The proof for the above theorem is given in Appendix~\ref{appendix:exact}.}

\section{\revised{Theoretical Analysis for Approximate Decomposition}}
\label{sec:noisy}

In this section, we provide a theoretical analysis for approximate multi-scale low rank decomposition when the measurement is corrupted by additive noise as described in Section~\ref{ssec:robust}. Our result follows arguments from Agarwal et al.~\cite{Agarwal:2012gc} on noisy $2$-scale matrix decomposition and extends it to the multi-scale setting. 

Instead of using the incoherence parameter $\mu_{ij}$ defined for the exact decomposition analysis in Section~\ref{sec:theory}, we opt for a weaker characterization of incoherence between scales for approximate decomposition, studied in Agarwal et al.~\cite{Agarwal:2012gc}. Concretely, we consider spikiness parameters $\alpha_{ij}$ between the $j$th signal component $X_j$ and $i$th scale norm $\normi{\cdot}{i}$ such that,
\begin{align*}
 \alpha_{ij} = \dnormi{ X_j }{i}
\end{align*}
for each $j \ne i$. Hence, if $\alpha_{ij}$ is small, we say $X_j$ is not spiky with respect to the $i$th norm.

For analysis purpose, we also impose the constraints $\dnormi{ X_j }{i} \le \alpha_{ij}$ in the convex program. That is, we consider the solution from the following convex program:
\begin{align*}
		& \underset{X_1,\hdots ,X_\levels, X_\noise }{\text{minimize}} \numberthis \label{eq:noisy}
		& &   \sumi \lambda_i \normi{ X_i }{i} + \lambda_\noise \fro{ X_\noise } \\
		& \text{subject to}
		& & Y = \sumi X_i + X_\noise\\
		& & & \dnormi{ X_j }{i} \le \alpha_{ij} ~~~~\text{for $j \ne i$} \numberthis \label{eq:constraint}
\end{align*}
We emphasize that the additional constraints~\eqref{eq:constraint} are imposed only for the purpose of theoretical analysis and are not imposed in our experimental results. \revisedtwo{In particular, for our simulation example in Figure~\ref{fig:hanning_noisy}, the minimizer of the convex program~\eqref{eq:cvx}, using the recommended regularization parameters in Section~\ref{sec:lambda}, satisfied the constraints~\eqref{eq:constraint} even when the constraints were not imposed.}

\revisedtwo{Let us define $\seti{\Delta_i}$ and $\Delta_\noise$ to be the errors between the ground truth components $\seti{X_i}$ and $X_\noise$ and the minimizers of convex program $\eqref{eq:noisy}$}. Then, equivalently, we can denote  $\seti{ X_i + \Delta_i }$ and $X_\noise + \Delta_\noise$ as the minimizers of the convex program~\eqref{eq:noisy}. The following theorem states our main result for approximate decomposition.

\begin{theorem}
If we choose $\seti{\lambda_i}$ such that
\begin{align*}
		  \lambda_i &\ge  2 \lambda_\noise \frac{ \dnormi{ X_\noise }{i} } { \fro{ X_\noise }} + \sum_{j \ne i} 2\alpha_{ij}
		  \numberthis \label{lambda_condition}
\end{align*}
and $\lambda_\noise$ such that
\begin{align*}
		 \lambda_\noise \ge \sqrt{64 \sumi \lambda_i^2  \sumb \rank_b}
		  \numberthis \label{lambdaN_condition}
\end{align*}
then the error is bounded by
\begin{align*}
		  \sumi \fro{  \Delta_i }
		&\lesssim  \frac{\fro{ X_\noise } }{\lambda_\noise} \sumi \lambda_i  \sqrt{\sumb \rank_b }
\end{align*}
where $\lesssim$ denotes inequality up to a universal constant.
\end{theorem}
\label{thm:noisy}
Hence, when the spikiness parameters are negligible and $X_\noise = \sigma G$, where $G$ is an independent, identically distributed Gaussian noise matrix with unit variance and $\sigma$ is the noise standard deviation, choosing $\lambda_\noise \sim E[\fro{   G }] \sim \sqrt{MN}$ and $\lambda_i \sim  E[\dnormi{ G }{i} ] \sim \sqrt{m_i} + \sqrt{n_i} +\sqrt{ \log (  M N /\max  ( m_i, n_i ) )} $ ensures the condition is satisfied with high probability. This motivates the recommended regularization selection in Section~\ref{sec:lambda}.

The proof for the above theorem is given in Appendix~\ref{appendix:approximate} and follows arguments from Agarwal et al.~\cite{Agarwal:2012gc} on noisy matrix decomposition and Belloni et al.~\cite{Belloni:2011wd} on square root LASSO.


\section{An Iterative Algorithm for Solving the Multi-scale Low Rank Decomposition}
\label{sec:solve}

In the following, we will derive an iterative algorithm that solves for the multi-scale low rank decomposition via the Alternating Direction of Multiple Multipliers (ADMM)~\cite{Boyd:2011bw}. While the proposed convex formulation (\ref{eq:cvx}) can be formulated into a semi-definite program, first-order iterative methods are commonly used when solving for large datasets for their computational efficiency and scalability. A conceptual illustration of the algorithm is shown in Figure~\ref{fig:algorithm}. \revisedtwo{\erasedtwo{Moreover, iterative algorithms often provide practical insights to the convex problem itself. For multi-scale low rank decomposition, our iterative algorithm has the interpretation of repeatedly removing inter-scale interference on the observed data via thresholding the singular values as shown in Figure~\ref{fig:algorithm}.}}

\begin{figure}[!ht]
\begin{center}

\ifdefined \single
	\includegraphics[width=0.8\linewidth]{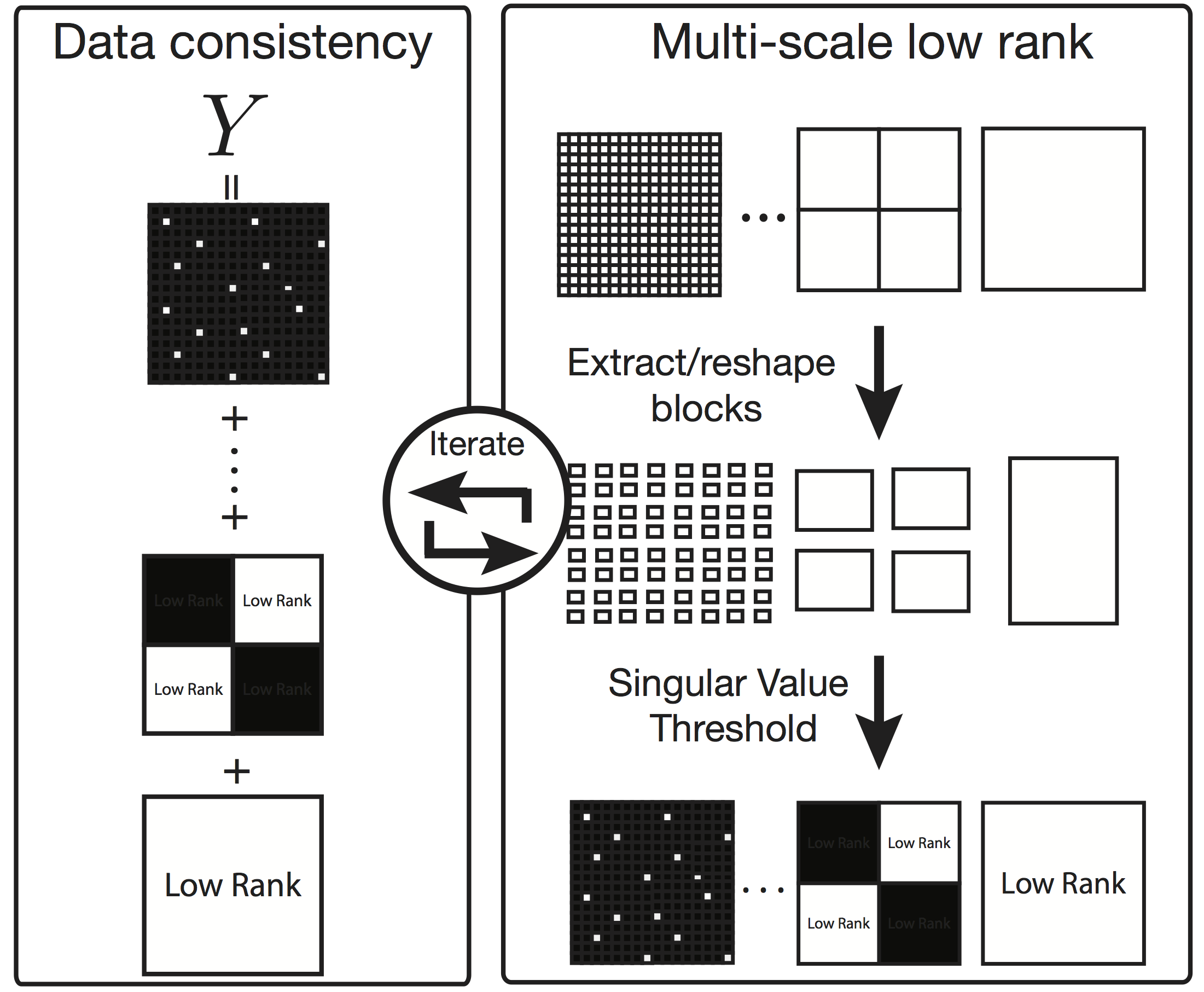}
\else
	\includegraphics[width=0.8\linewidth]{figures/algorithm.png}
\fi
\caption{A conceptual illustration of how to obtain a multi-scale low rank decomposition. First, we extract each block from the input matrix and perform a thresholding operation on its singular value to recover the significant components. Then, we subtract these significant components from our input matrix, thereby enabling the recovery of weaker, previously submerged components.}
\label{fig:algorithm}
\end{center}
\end{figure}

To formally obtain update steps using ADMM, we first formulate the problem into the standard ADMM form with two separable objectives connected by an equality constraint,
\begin{equation}
	\begin{aligned}
		& \underset{X_i, ~Z_i }{\text{minimize}}
		& & \mathbf{I} \left \{ Y = \sumi X_i  \right \} + \sumi \lambda_i \normi{ Z_i} {i}  \\
		& \text{subject to}
		& & { X_i = Z_i }
	\end{aligned}
	\label{eq:cvx_admm}
\end{equation}
where $\mathbf{I} \{ \cdot \}$ is the indicator function. 

To proceed, we then need to obtain the proximal operators~\cite{Parikh:2013vb} for the two objective functions $\mathbf{I}\{ Y = \sumi X_i  \} $ and $\sumi \lambda_i \normi{ Z_i} {i}$. For the data consistency objective $\mathbf{I}\{ Y = \sumi X_i  \} $, the proximal operator is simply the projection operator to the set. To obtain the proximal operator for the multi-scale nuclear norm objective $\sumi \lambda_i \normi{ X_i} {i}$, we first recall that the proximal operator for the nuclear norm $\nuc{ X }$ with parameter $\lambda$ is given by the singular value soft-threshold operator~\cite{Recht:2010ht},
\begin{equation}
	\begin{aligned}
		\svt_{\lambda} ( X ) = U \max (\Sigma - \lambda , 0 )V^\top
	\end{aligned}
	\label{eq:svt}
\end{equation}

Since we defined the block-wise nuclear norm for each scale $i$ as $\sum_{\block \in \part_i}  \nuc{ R_{\block} ( \cdot ) } $, the norm is separable with respect to each block and its proximal function with parameter $\lambda_i$ is given by the block-wise singular value soft-threshold operator,
\begin{equation}
	\begin{aligned}
		\bsvt_{\lambda_i} ( X ) = \sum_{\block \in P_i} R_{\block}^\top ( \svt_{\lambda_i} ( R_{\block} (X) ) ) 
	\end{aligned}
	\label{eq:bsvt}
\end{equation}
which simply extracts every blocks in the matrix, performs singular value thresholding and puts the blocks back to the matrix. We note that for $1 \times 1$ blocks, the block-wise singular value soft-threshold operator reduces to the element-wise soft-threshold operator and for $1 \times N$ blocks, the block-wise singular soft-threshold operator reduces to the joint soft-threshold operator.

Putting everything together and invoking the ADMM recipe~\cite{Boyd:2011bw}, we have the following algorithm to solve our convex multi-scale low rank decomposition (\ref{eq:cvx}):
\begin{equation}
\boxed{
	\begin{aligned}
		X_i & \leftarrow (Z_i - U_i) +  \frac{1}{L} \left ( Y - \sumi (Z_i - U_i) \right )  \\
		Z_i & \vphantom{\frac11}\leftarrow \bsvt_{\lambda_i / \rho} \left ( X_i + U_i \right ) \\
		U_i & \vphantom{\frac11}\leftarrow U_i - (Z_i - X_i)
	\end{aligned}
	\label{eq:admm}
	}
\end{equation}
where $\rho$ is the ADMM parameter that only affects the convergence rate of the algorithm.

The resulting ADMM update steps are similar in essence to the intuitive update steps in Figure~\ref{fig:algorithm}, and alternates between data consistency and enforcing multi-scale low rank. The major difference of ADMM is that it adds a dual update step with $U_i$, which bridges the two objectives and ensures the convergence to the optimal solution. Under the guarantees of ADMM, in the limit of iterations, $X_i$ and $Z_i$ converge to the optimal solution of the convex program (\ref{eq:cvx}) and $U_i$ converges to a scaled version of the dual variable. In practice, we found that $\sim 1000$ iterations are sufficient without any visible change for imaging applications. \revised{\reviewercommentnum{R1.4} Finally, we note that because the proximal operator for the multi-scale nuclear norm is computationally simple, other proximal operator based algorithms~\cite{Parikh:2013vb} can also be used.}

\section{ Computational Complexity }
\label{sec:speed}

Given the iterative algorithm~(\ref{eq:admm}), one concern about the multi-scale low rank decomposition might be that it is significantly more computationally intensive than other low rank methods as we have many more SVD's and variables to compute for. In this section, we show that because we decimate the matrices at each scale geometrically, the theoretical computational complexity of the multi-scale low rank decomposition is similar to other low rank decomposition methods, such as the low rank + sparse decomposition.

For concreteness, let us consider the multi-scale partition with two-sided decimation shown in Figure~\ref{fig:blocks} and have block sizes $m_i = 2^{i-1}$ and $n_i = 2^{i-1}$. Similar to other low rank methods, the SVD's dominate the per iteration complexity for the multi-scale low rank decomposition. For an $M \times N$ matrix, each SVD costs \revised{$\nflops( M \times N \text{ SVD} ) = O(MN^2)$}. The per iteration complexity for the multi-scale low rank decomposition is dominated by the summation of all the SVD's performed for each scale, which is given by,
\begin{equation}
	\begin{aligned}
		& \nflops \left( M \times N \text{ SVD} \right) + 4 ~ \nflops \left( M/2  \times N/2  \text{ SVD} \right) + \hdots  \\
		&= O( M N^2 ) +  O( M N^2 ) /2  +  O ( M N^2 ) /4 + \hdots  \\
		& \le  2 O(M N^2) \approx \nflops( M \times N \text{ SVD} )
	\end{aligned}
\end{equation}

Hence, the per-iteration computational complexity of the multi-scale low rank with two-sided decimated partition is on the order of a $M \times N$ matrix SVD. In general, one can show that the per-iteration complexity for arbitrary multi-scale partition is at most $\log( N )$ times the full matrix SVD.\revisedtwo{\erasedtwo{, which is tolerable if a multi-scale modeling is more suitable modeling.}}

While theoretically, the computation cost for small block sizes should be less than bigger block sizes, we found that in practice the computation cost for computing the small SVD's can dominate the per-iteration computation. This is due to the overhead of copying small block matrices and calling library functions repeatedly to compute the SVD's. 

Since we are interested in thresholding the singular values and in practice many of the small block matrices are zero as shown in Section~\ref{sec:exp}, one trick of reducing the computation time is to quickly compute an upper bound on the maximum singular value for block matrices before the SVD's. Then if the upper bound for the maximum singular value is less than the threshold, we know the thresholded matrix will be zero and can avoid computing the SVD.\revisedtwo{\erasedtwo{ at all.}} Since for any matrix $X$, its maximum singular value is bounded by the square root of any matrix norm on $X^\top X$ ~\cite{Horn:2012tf}, there are many different upper bounds that we can use. In particular, we choose the maximum row norm and consider the following upper bound,
\begin{equation}
	\begin{aligned}
		\sigma_{\max} ( X ) \le \sqrt{ \max_i \sum_j | X_{ik} X_{jk} | }
	\end{aligned}
\end{equation}

Using this upper bound, we can identify many below-the-threshold matrices before computing the SVD's at all. In practice, we found that the above trick provides a modest speedup of $3 \sim 5 \times$.

\section{Heuristics for translation invariant decomposition}
\label{sec:ti}

Similar to wavelet transforms, one drawback of the multi-scale low rank decomposition is that it is not translation invariant, \revised{\reviewercommentnum{R1.7}that is, shifting the input changes the resulting decomposition}. In practice, this translation variant nature often creates blocking artifacts near the block boundaries, which can be visually jarring for image or video applications. One solution to remove these artifacts is to introduce overlapping partitions of the matrix so that the overall algorithm is translation invariant. However, this vastly increases both memory and computation especially for large block sizes. In the following, we will describe a cycle spinning approach that we used in practice to reduce the blocking artifacts with only slight increase in per-iteration computation.

\begin{figure}[!ht]
\begin{center}

\ifdefined \single
	\includegraphics[width=0.9\linewidth]{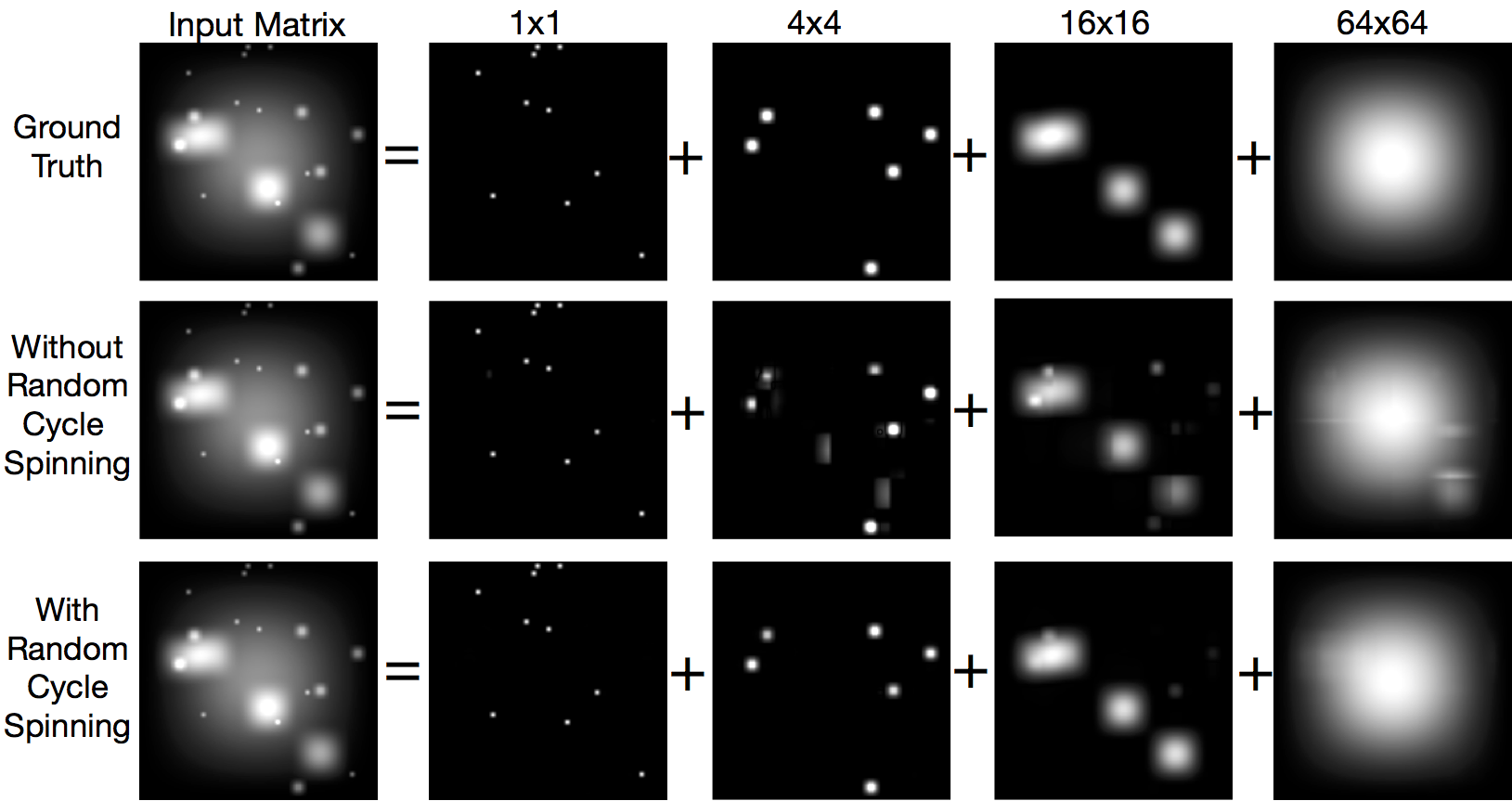}
\else
	\includegraphics[width=0.8\linewidth]{figures/hanning_shift.png}
\fi
\caption{An example of the multi-scale low rank decomposition with and without random cycle spinning. Each blob in the input matrix Y is a rank-1 matrix constructed from an outer product of hanning windows and is placed at random positions. Blocking artifacts can be seen in the decomposition without random cycle spinning while vastly diminished in the random cycle spinned decomposition.}
\label{fig:hanning_shift}
\end{center}
\end{figure}

Cycle spinning~\cite{Coifman:1995ji} has been commonly used in wavelet denoising to reduce the blocking artifacts due to the translation variant nature of the wavelet transform. To minimize artifacts, cycle spinning averages the denoised results from all possible \revised{shifted copies of the input}, thereby making the entire process translation invariant. \revised{\reviewercommentnum{R1.7} Concretely, let $S$ be the set of all shifts possible in the target application, $\textsc{Shift}_s$ denote the shifting operator by $s$, and $\textsc{Denoise}$ be the denoising operator of interest. Then the cycle spinned denoising of the input $X$ is given by:}
\begin{equation}
	\begin{aligned}
	\revised{\frac{1}{|S|} \sum_{s \in S} \textsc{Shift}_{-s} (\textsc{Denoise}  (\textsc{Shift}_s (X) ))}
	\end{aligned}
\end{equation}
In the context of multi-scale low rank decomposition, we can make the iterative algorithm translation invariant by replacing the block-wise singular value thresholding operation in each iteration with its cycle spinning counterpart. \revised{In particular, for our ADMM update steps, we can replace the $Z_i$ step to:}
\begin{equation}
	\begin{aligned}
	 \revised{Z_i \leftarrow \frac{1}{|S|} \sum_{s \in S} \textsc{Shift}_{-s} (\bsvt_{\lambda_i / \rho}  (\textsc{Shift}_s (X_i+U_i) ))}
	\end{aligned}
\end{equation}

To further reduce computation, we perform random cycle spinning in each iteration as described in Figueiredo et al.~\cite{Figueiredo:2003gd}, in which we randomly shifts the input, performs block-wise singular value thresholding and then unshifts back:
\begin{equation}
	\begin{aligned}
	 \revised{Z_i \leftarrow \textsc{Shift}_{-s} (\bsvt_{\lambda_i / \rho}  (\textsc{Shift}_s (X_i+U_i) ))}
	\end{aligned}
\end{equation}
\revised{where $s$ is randomly chosen from the set $S$.}

Using random cycle spinning, blocking artifacts caused by thresholding are averaged over iterations and in practice, reduces distortion significantly. Figure~\ref{fig:hanning_shift} shows an example of the multi-scale low rank decomposition with and without random cycle spinning applied on a simulated data that does not fall on the partition grid. The decomposition with random cycle spinning vastly reduces blocking artifacts that appeared in the one without random cycle spinning.



\section{Applications}
\label{sec:exp}

To test for practical performance, we applied the multi-scale low rank decomposition on four different real datasets that are conventionally used in low rank modeling: illumination normalization for face images (Section~\ref{ssec:face}), motion separation for surveillance videos (Section~\ref{ssec:surv}), multi-scale modeling of dynamic contrast enhanced magnetic resonance imaging (Section~\ref{ssec:dce}) and collaborative filtering exploiting age information (Section~\ref{ssec:collab}). We compared our proposed multi-scale low rank decomposition with low rank + sparse decomposition for the first three applications and with low rank matrix completion for the last application. Randomly cycle spinning was used for multi-scale low rank decomposition for all of our experiments. Regularization parameters $\lambda_i$ were chosen exactly as $\sqrt{ m_i }  + \sqrt{ n_i } + \sqrt{ \log (  M N  / \max  ( m_i, n_i )  )}$ for multi-scale low rank and $\max ( m_i, n_i )$ for low rank + sparse decomposition. Our simulations were written in the C programming language and ran on a 20-core Intel Xeon workstation. Some results are best viewed in video format, which are available as supplementary materials.

In the spirit of reproducible research, we provide a software package (in C and partially in MATLAB) to reproduce most of the results described in this paper. The software package can be downloaded from:
\begin{center}
\erased{\url{http://www.eecs.berkeley.edu/~mlustig/Software.html}}
\revised{ \url{https://github.com/frankong/multi_scale_low_rank.git} }
\end{center}

\subsection{Multi-scale Illumination Normalization for Face Recognition Pre-processing}
\label{ssec:face}

\begin{figure*}[!ht]
\begin{center}

\ifdefined \single
	\includegraphics[width=\linewidth]{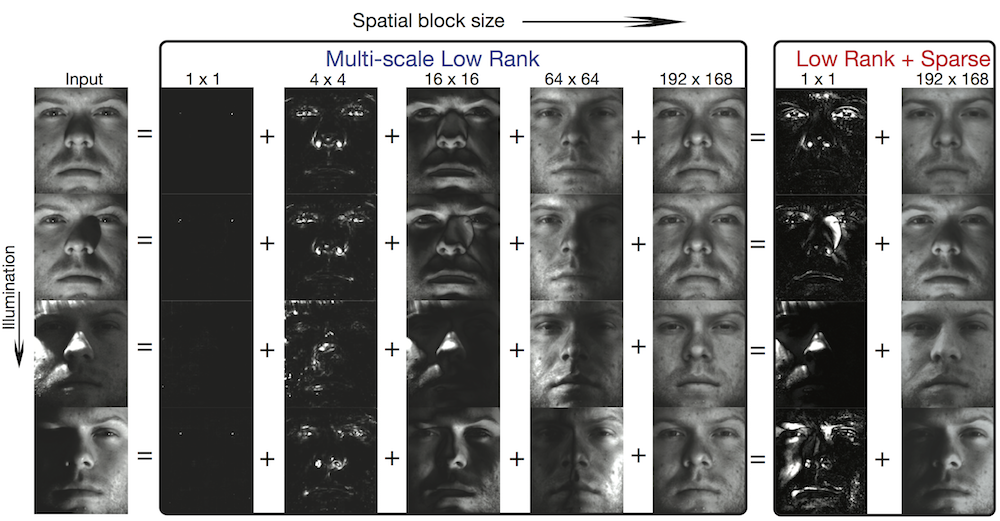}
\else
	\includegraphics[width=0.75\linewidth]{figures/face.png}
\fi

\caption{Multi-scale low rank versus low rank + sparse on faces with uneven illumination. Multi-scale low rank decomposition recovers almost shadow-free faces, whereas low rank + sparse decomposition can only remove some shadows.}
\label{fig:face}
\end{center}
\end{figure*}

Face recognition algorithms are sensitive to shadows or occlusions on faces. In order to obtain the best possible performance for these algorithms, it is desired to remove illumination variations and shadows on the face images. Low rank modeling are often used to model faces and is justified by approximating faces as convex Lambertian surfaces~\cite{Basri:2003ie}. 

Low rank + sparse decomposition~\cite{Candes:2011bd} was recently proposed to capture uneven illumination as sparse errors and was shown to remove small shadows while capturing the underlying faces as the low rank component. However, most shadows are not sparse and contain structure over different lighting conditions. Here, we propose modeling shadows and illumination changes in different face images as block-low rank as illumination variations are spatially correlated in multiple scales.

We considered face images from the Yale B face database~\cite{Georghiades:LXuok52i}. Each face image was of size $192 \times 168$ with $64$ different lighting conditions. The images were then reshaped into a $32,256 \times 64$ matrix and both multi-scale low rank and low rank + sparse decomposition were applied on the data matrix. For low rank + sparse decomposition, we found that the best separation result was obtained when each face image was normalized to the maximum value. For multi-scale low rank decomposition, the original unscaled image was used. Only the space dimension was decimated as we assumed there was no ordering in different illumination conditions. The multi-scale matrix partition can be visualized as in Figure~\ref{fig:blocks2}. 

Figure~\ref{fig:face} shows one of the comparison results. Multi-scale low rank decomposition recovered almost shadow-free faces. In particular, the sparkles in the eyes were represented in the $1 \times 1$ block size and the larger illumination changes were represented in bigger blocks, thus capturing most of the uneven illumination changes. In contrast, low rank + sparse decomposition could only recover from small illumination changes and still contained the larger shadows in the globally low rank component.

\subsection{Multi-scale Motion Separation for Surveillance Videos}
\label{ssec:surv}

\begin{figure*}[!ht]
\begin{center}
\ifdefined \single
\includegraphics[width=\linewidth]{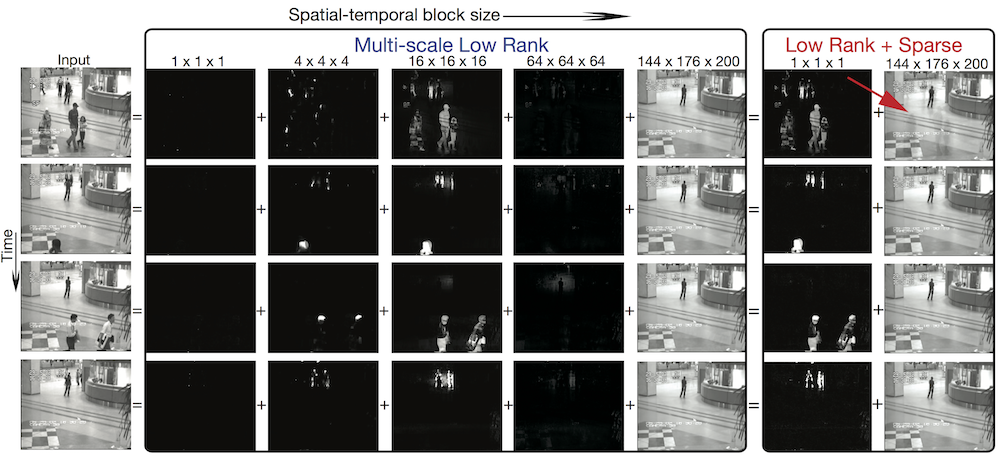}
\else
\includegraphics[width=0.8\linewidth]{figures/hall.png}
\fi

\caption{Multi-scale low rank versus low rank + sparse decomposition on a surveillance video. For the multi-scale low rank, body motion is mostly captured in the $16 \times 16 \times 16$ scale while fine-scale motion is captured in $4 \times 4 \times 4$ scale. Background video component is captured in the globally low rank component and is almost artifact-free. Low rank + sparse decomposition exhibits ghosting artifacts as pointed by the red arrow because they are neither globally low rank or sparse. }
\label{fig:hall}
\end{center}
\end{figure*}

In surveillance video processing, it is desired to extract foreground objects from the video. To be able to extract foreground objects, both the background and the foreground dynamics have to be modeled. Low rank modeling have been shown to be suitable for slowly varying videos, such as background illumination changes. In particular, if the video background only changes its brightness over time, then it can be represented as a rank-$1$ matrix. 

Low rank + sparse decomposition~\cite{Candes:2011bd} was proposed to foreground objects as sparse components and was shown to separate dynamics from background components. However, sparsity alone cannot capture motion compactly and often results in ghosting artifacts occurring around the foreground objects as shown in Figure~\ref{fig:hall}. Since video dynamics are correlated locally at multiple scales in space and time, we propose using the multi-scale low rank modeling with two sided decimation to capture different scales of video dynamics over space and time.

We considered a surveillance video from Li et al.~\cite{Li:2004fk}. Each video frame was of size $144 \times 176$ and the first $200$ frames were used. The video frames were then reshaped into a $25,344 \times 200$ matrix and both multi-scale low rank and low rank + sparse decomposition were applied on the data matrix.

Figure~\ref{fig:hall} shows one of the results. Multi-scale low rank decomposition recovered a mostly artifact free background video in the globally low rank component whereas low rank + sparse decomposition exhibits ghosting artifact in certain segments of the video. For the multi-scale low rank decomposition, body motion was mostly captured in the $16 \times 16 \times 16$ scale while fine-scale motion was captured in $4 \times 4 \times 4$ scale.

\subsection{Multi-scale Low Rank Modeling for Dynamic Contrast Enhanced Magnetic Resonance Imaging}
\label{ssec:dce}

\begin{figure*}[!ht]
\begin{center}

\ifdefined \single
\includegraphics[width=\linewidth]{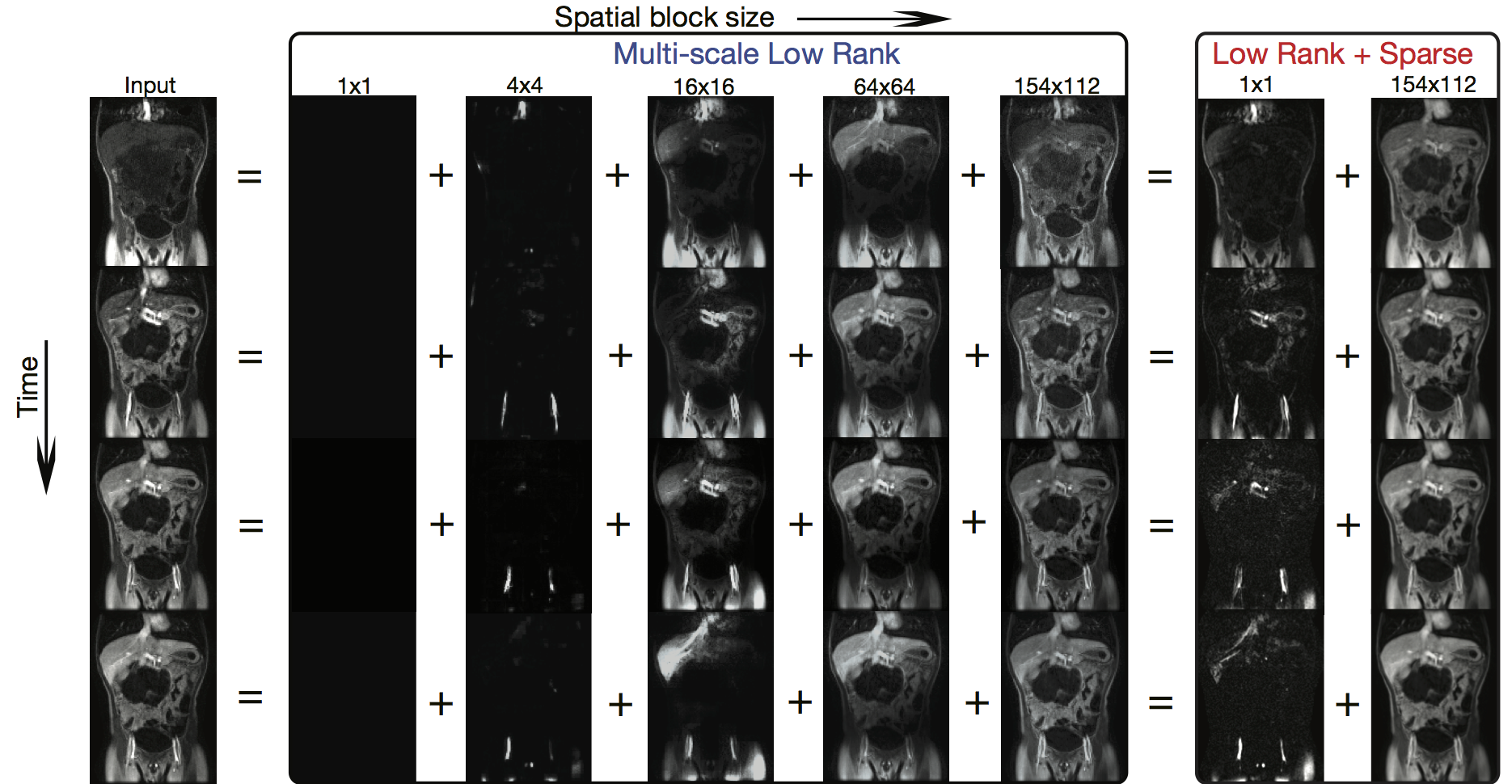}
\else
\includegraphics[width=0.75\linewidth]{figures/dce.png}
\fi

\caption{Multi-scale low rank versus low rank + sparse decomposition on a dynamic contrast enhanced magnetic resonance image series. For the multi-scale result, small contrast dynamics in vessels are captured in $4 \times 4$ blocks while contrast dynamics in the liver are captured in $16 \times16$ blocks. The biggest block size captures the static tissues and interestingly the respiratory motion. In contrast, the low rank + sparse modeling could only provide a coarse separation of dynamics and static tissue, which result in neither truly sparse nor truly low rank components.}
\label{fig:dce}
\end{center}
\end{figure*}

In dynamic contrast enhanced magnetic resonance imaging (DCE-MRI), a series of images over time is acquired after a $T_1$ contrast agent was injected into the patient. Different tissues then exhibit different contrast dynamics over time, thereby allowing radiologists to characterize and examine lesions. Compressed sensing Magnetic Resonance Imaging~\cite{Lustig:2007cu} is now a popular research approach used in three dimensional DCE-MRI to speed up acquisition. Since the more compact we can represent the image series, the better our compressed reconstruction result becomes, an accurate modeling of the dynamic image series is desired to improve the compressed sensing reconstruction results for DCE-MRI.

When a region contains only one type of tissue, then the block matrix constructed by stacking each frame as columns will have rank $1$. Hence, low rank modeling~\cite{Liang:2007gf}, and locally low rank modeling~\cite{Zhang:2015dva} have been popular models for DCE-MRI. Recently, low rank + sparse modeling~\cite{Otazo:2014it} have also been proposed to model the static background and dynamics as low rank and sparse matrices respectively. However, dynamics in DCE-MRI are almost never sparse and often exhibit correlation across different scales. Hence, we propose using a multi-scale low rank modeling to capture contrast dynamics over multiple scales.

We considered a fully sampled dynamic contrast enhanced image data. The data was acquired in a pediatric patient with 20 contrast phases, $1\times1.4\times2$ mm$^3$ resolution, and $~8$s temporal resolution. The acquisition was performed on a $3$T GE MR750 scanner with a 32-channel cardiac array using an RF-spoiled gradient-echo sequence. We considered a 2D slice of size $154 \times 112$ were then reshaped into a $17,248 \times 20$ matrix. Both multi-scale low rank and low rank + sparse decomposition were applied on the data matrix.

Figure~\ref{fig:dce} shows one of the results. In the multi-scale low rank decomposition result, small contrast dynamics in vessels were captured in $4 \times 4$ blocks while contrast dynamics in the liver were captured in $16 \times16$ blocks. The biggest block size captured the static tissues and interestingly the respiratory motion. Hence, different types of contrast dynamics were captured compactly in their suitable scales. In contrast, the low rank + sparse modeling could only provide a coarse separation of dynamics and static tissue, which resulted in neither truly sparse nor truly low rank components.

\subsection{Multi-scale Age Grouping for Collaborative Filtering}
\label{ssec:collab}

\begin{figure}[!ht]
\begin{center}
\includegraphics[width=\linewidth]{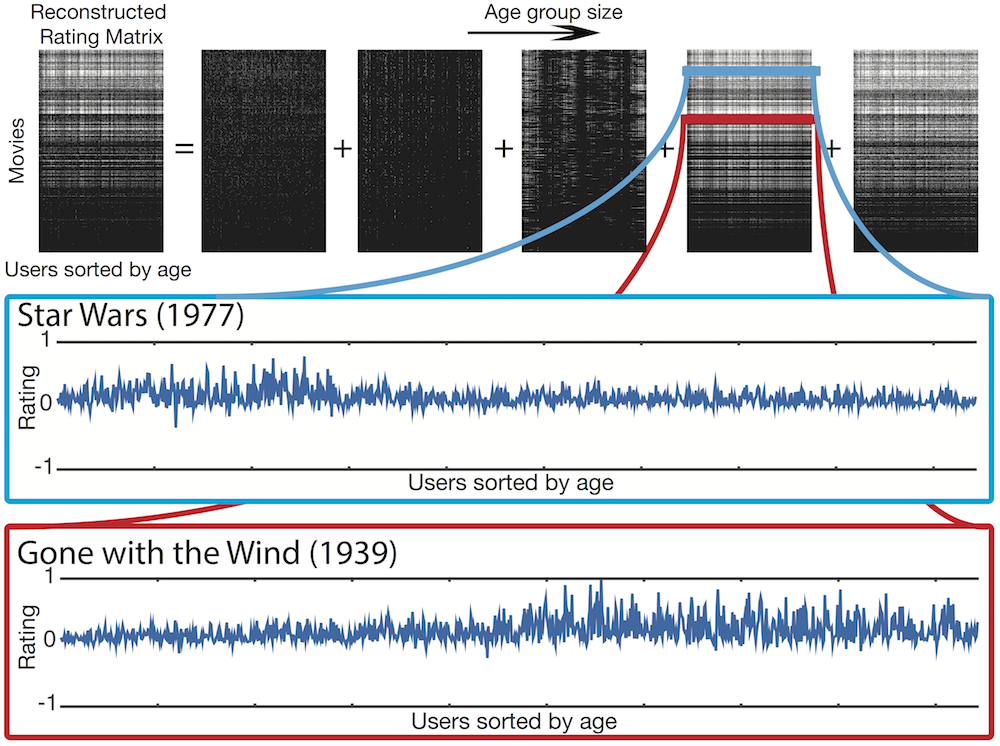}
\caption{Multi-scale low rank reconstructed matrix of the 100K MovieLens dataset. The extracted signal scale component captures the tendency that younger users rated Star Wars higher whereas the more senior users rated Gone with the Wind higher.}
\label{fig:collab}
\end{center}
\end{figure}

Collaborative filtering is the task of making predictions about the interests of a user using available information from all users. Since users often have similar taste for the same item, low rank modeling is commonly used to exploit the data similarity to complete the rating matrix~\cite{Candes:2010jb,Candes:2009kj,Recht:2010ht}. On the other hand, low rank matrix completion does not exploit the fact that users with similar demographic backgrounds have similar taste for similar items. In particular, users of similar age should have similar taste. Hence, we incorporated the proposed multi-scale low rank modeling with matrix completion by partitioning users according to their age and compared it with the conventional low rank matrix completion. \revised{Our method belongs to the general class of collaborative filtering methods that utilize demographic information~\cite{VOZALIS:2007ki}.}

To incorporate multi-scale low rank modeling into matrix completion, we change the data consistency constraint in problem~(\ref{eq:cvx}) to $[Y]_{jk} = [\sumi X_i]_{jk}$ for observed $jk$ entries, and correspondingly, the update step for $\seti{X_i}$ in equation~(\ref{eq:admm}) is changed to $[X_i]_{jk} \leftarrow [(Z_i - U_i) +  \frac{1}{L} ( Y - \sumi (Z_i - U_i) ) ]_{jk}$ for observed $jk$ entries and $[X_i]_{jk} \leftarrow [Z_i - U_i ]_{jk}$ for unobserved $jk$ entries. \revised{\reviewercommentnum{R1.2} We emphasize that our theoretical analysis does not cover matrix completion and the presented collaborative filtering application is mainly of empirical interest.}

To compare the methods, we considered the 100K MovieLens dataset, in which 943 users rated 1682 movies. The resulting matrix was of size $1682 \times 943$, where the first dimension represented movies and the second dimension represented users. The entire matrix had $93.7 \%$ missing entries. Test data was further generated by randomly undersampling the rating matrix by $5$. The algorithms were then run on the test data and root mean squared errors were calculated over all available entries.  To obtain a multi-scale partition of the matrix, we sorted the users according to their age along the second dimension and partitioned them evenly into age groups. 

Figure~\ref{fig:collab} shows a multi-scale low rank reconstructed user rating matrix. Using multiple scales of block-wise low rank matrices, correlations in different age groups were captured. For example, one of the scales shown in Figure~\ref{fig:collab} captures the tendency that younger users rated Star Wars higher whereas the more senior users rated Gone with the Wind higher. The multi-scale low rank reconstructed matrix achieved a root mean-squared-error of 0.9385 compared to a root mean-squared-error  of 0.9552 for the low rank reconstructed matrix.



\section{Discussion}
\label{discussion}

We have presented a multi-scale low rank matrix decomposition method that combines both multi-scale modeling and low rank matrix decomposition. Using a convex formulation, we can solve for the decomposition efficiently and exactly, provided that the multi-scale signal components are incoherent. We provided a theoretical analysis of the convex relaxation \revised{for exact decomposition}, which extends the analysis in Chandrasekaren et al.~\cite{Anonymous:2011kn}, \revised{and an analysis for approximate decomposition in the presence of additive noise, which extends the analysis in Agarwal et al.~\cite{Agarwal:2012gc}}. We also provided empirical results that the multi-scale low rank decomposition performs well on real datasets.

We would also like to emphasize that \revised{\reviewercommentnum{R1.1}\erased{the multi-scale low rank decomposition does not have any free parameter if the recommended regularization parameters are used, which}our recommended regularization parameters} empirically perform well even with the addition of noise, \revised{and hence in practice does not require manual tuning}. \revised{\erased{Hence, the multi-scale low rank prevents manual tuning and work ``right-out-of-the-box". }}\revised{While some form of }theoretical guarantees for the regularization parameters are \revised{provided in the approximate decomposition analysis, complete theoretical guarentees are not provided, especially for noiseless situations,} and would be valuable for future work. 

Our experiments show that the multi-scale low rank decomposition improves upon the low rank + sparse decomposition in a variety of applications. We believe that more improvement can be achieved if domain knowledge for each applications is incorporated with the multi-scale low rank decomposition. For example, for face shadow removal, prior knowledge of the illumination angle might be able to provide a better multi-scale partition. For movie rating collaborative filtering, general demographic information and movie types can be used to construct multi-scale partitions in addition to age information. 

\begin{appendices}

\section{Proof of Theorem~\ref{thm:main}}
\label{appendix:exact}

In this section, we provide a proof of Theorem~\ref{thm:main} and show that if $\seti{ X_i}$ satisfies a deterministic incoherence condition, then the proposed convex formulation (\ref{eq:cvx}) recovers $\seti{X_i}$ from $Y$ exactly. Our proof makes use of the dual certificate common in such proofs. \revisedtwo{We will begin by proving a technical lemma collecting three inequalities.}

\begin{lemma}
For $i = 1, \hdots, L$, the following three inequalities hold,
\begin{align*}
		 \dnormi{ \proj_{T_i}( X ) }{i}  &\le \dnormi{  X  }{i}  &\text{for any matrix } X \numberthis \label{eq:norm_ineq1} \\
		 \dnormi{ \proj_{T_i^\perp}( X ) }{i}  &\le \dnormi{  X  }{i} &\text{for any matrix } X \numberthis \label{eq:norm_ineq2} \\
		  \dnormi{ N_j } { i } &\le \mu_{ij}  \dnormi{ N_j } {j } &\text{for } j \ne i \text{ and }  N_j \in T_j	\numberthis	\label{eq:inc_prop}
\end{align*}
\label{lemma2}

\end{lemma}

\begin{proof}

To show the first inequality~\eqref{eq:norm_ineq1}, we recall that $\| X \|_{(i)}^* = \max_{b \in \part_i}  \msv{ R_{\block} ( X ) }$. Then, using the variational representation of the maximum singular value norm, we obtain,
\begin{align*}
  \dnormi{ \proj_{T_i}( X ) }{i} &=  \max_{b \in \part_i}~ \max_{u,v}~  u^\top  R_{\block} (\proj_{T_i} (X) ) v \\
  &= \max_{b \in \part_i}~ \max_{\substack{u \in \text{col}(R_b(X_i)) \text{ or } \\ v \in \text{row}(R_b(X_i))}} ~ u^\top R_{\block} (X) v\\
  &\le \max_{b \in \part_i} ~\max_{u,v}~ u^\top R_{\block} (X) v = \dnormi{ X }{i}
\end{align*}
where $\text{col}$ and $\text{row}$ denote the column and row spaces respectively.

Similarly, we obtain the second inequality~\eqref{eq:norm_ineq2}:
\begin{align*}
  \dnormi{ \proj_{T_i^\perp}( X ) }{i} &= \max_{b \in \part_i}~ \max_{\substack{u \in \text{col}^\perp(R_b(X_i)) \text{ and } \\ v \in \text{row}^\perp(R_b(X_i))}} ~ u^\top R_{\block} (X) v\\
  &\le \max_{b \in \part_i} ~\max_{u,v}~ u^\top R_{\block} (X) v = \dnormi{ X }{i}
\end{align*}

The third inequality~\eqref{eq:inc_prop} follows from the incoherence definition that $\mu_{ij} \ge \dnormi{N_j}{i} / \dnormi{N_j}{j}$ for any non-zero $N_j$.

\end{proof}

Next, we will show that if we can choose some parameters to ``balance" the coherence between the scales, then the block-wise row/column spaces $\seti{ T_i }$ are independent, that is $\sumi T_i$ is a direct sum. Consequently, each matrix $N$ in the span of $\seti{T_i}$ has a unique decomposition $N = \sumi N_i$, where $N_i \in T_i$.

\begin{prop}\label{prop:lin_ind}

If we can choose some positive parameters $\seti{ \lambda_i }$ such that
\begin{equation}
		\begin{aligned}
			\sum_{j \ne i}  \mu_{ij}   \frac{\lambda_j}{ \lambda_i} < 1, ~~~ \text{for } i = 1, \hdots, L
		\end{aligned}
\end{equation}
then we have
\begin{equation}
		\begin{aligned}
		T_i \cap \sum_{j \ne i} T_j = \{ 0 \}, ~~~ \text{for } i = 1, \hdots, L
		\end{aligned}
\end{equation}
In particular when $L = 2$, the condition on $\{ \mu_{12}, \mu_{21} \}$ reduces to $\mu_{12} \mu_{21} < 1$, which coincides with Proposition 1 in Chandrasekaren et al.~\cite{Anonymous:2011kn}. We also note that given $\mu_{ij}$, we can obtain $\seti{\lambda_i}$ that satisfies the condition $\sum_{j \ne i}  \mu_{ij}   \lambda_j <  \lambda_i$ by solving a linear program.

\end{prop}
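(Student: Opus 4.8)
The plan is to reduce the statement to a single claim about the zero vector: if $\sumi N_i = 0$ with $N_i \in T_i$ for each $i$, then every $N_i = 0$. Granting this, the intersection statement follows at once --- if $M \in T_i \cap \sum_{j\ne i}T_j$, write $M = \sum_{j\ne i}N_j$ with $N_j \in T_j$, set $N_i := -M \in T_i$, and observe that $\sumi N_i = 0$, so $M = -N_i = 0$.

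To prove the claim I would first record the one elementary estimate that does all the work. From $\sumi N_i = 0$ we have $-N_i = \sum_{j\ne i}N_j$; applying the dual norm $\dnormi{\cdot}{i}$ (which satisfies the triangle inequality) and then the incoherence property \eqref{eq:inc_prop} to each term gives, for every $i$,
\[
\dnormi{N_i}{i} \;\le\; \sum_{j\ne i}\dnormi{N_j}{i} \;\le\; \sum_{j\ne i}\mu_{ij}\,\dnormi{N_j}{j}.
\]
Writing $v_i := \dnormi{N_i}{i}\ge 0$, this says the nonnegative vector $(v_i)$ obeys the coordinatewise system $v_i \le \sum_{j\ne i}\mu_{ij} v_j$. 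Next I would bring in the hypothesis: since there are only finitely many scales and each strict inequality $\sum_{j\ne i}\mu_{ij}\lambda_j < \lambda_i$ holds, the constant $c := \max_i \lambda_i^{-1}\sum_{j\ne i}\mu_{ij}\lambda_j$ is strictly less than $1$, and $\sum_{j\ne i}\mu_{ij}\lambda_j \le c\,\lambda_i$ for all $i$. Let $t := \max_i v_i/\lambda_i$ (attained, as the index set is finite), so $v_j \le t\lambda_j$ for all $j$. Substituting,
\[
v_i \;\le\; \sum_{j\ne i}\mu_{ij} v_j \;\le\; t\sum_{j\ne i}\mu_{ij}\lambda_j \;\le\; t\,c\,\lambda_i,
\]
hence $v_i/\lambda_i \le tc$ for every $i$, i.e. $t \le tc$. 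Since $c<1$ this forces $t=0$, so $v\equiv 0$, and because $\dnormi{\cdot}{i} = \max_{\block\in\part_i}\msv{R_\block(\cdot)}$ is a genuine norm on matrices (the partition covers all entries), $v_i = 0$ gives $N_i = 0$ for all $i$. This is exactly the Perron--Frobenius / Collatz--Wielandt fact that a nonnegative matrix which contracts a strictly positive vector cannot fix any nonzero nonnegative vector; the only spot requiring a little care --- and the main potential obstacle --- is converting the strict per-row inequalities of the hypothesis into a single uniform contraction factor $c<1$, which is where finiteness of $L$ is essential.

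Finally I would dispatch the two side remarks. For $L=2$, the hypotheses read $\mu_{12}\lambda_2 < \lambda_1$ and $\mu_{21}\lambda_1 < \lambda_2$; multiplying them yields $\mu_{12}\mu_{21}<1$, and conversely if $\mu_{12}\mu_{21}<1$ one may take $\lambda_1 = 1$ and any $\lambda_2 \in (\mu_{12},\,1/\mu_{21})$, so the scalar condition $\mu_{12}\mu_{21}<1$ is precisely the feasibility of the $\lambda$-system, recovering Proposition~1 of~\cite{Anonymous:2011kn}. And the closing remark is immediate: for fixed $\mu_{ij}$ the constraints $\sum_{j\ne i}\mu_{ij}\lambda_j < \lambda_i$, $\lambda_i>0$, are linear in $\seti{\lambda_i}$, so a feasible choice can be found (or ruled out) by a linear program --- e.g. maximizing a common slack $\delta$ subject to $\sum_{j\ne i}\mu_{ij}\lambda_j + \delta \le \lambda_i$ and $\lambda_i \ge 1$.
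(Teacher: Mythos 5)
Your proof is correct and takes essentially the same route as the paper's: from a zero-sum decomposition $\sum_i N_i = 0$ with $N_i \in T_i$ you derive $\dnormi{N_i}{i} \le \sum_{j \ne i} \mu_{ij} \dnormi{N_j}{j}$ via the triangle inequality and~(\ref{eq:inc_prop}), then close with a weighted-maximum contraction (the paper packages this as a contradiction against $\max_{j\ne i}\dnormi{N_j}{j}$ after scaling by $\lambda_i$, you as $t \le c\,t$ with $c<1$ and $t=\max_i \dnormi{N_i}{i}/\lambda_i$). The only slip is a harmless index swap in the $L=2$ side remark: with $\lambda_1 = 1$ the feasible interval is $\lambda_2 \in (\mu_{21}, 1/\mu_{12})$ (equivalently, fix $\lambda_2 = 1$ and take $\lambda_1 \in (\mu_{12}, 1/\mu_{21})$), not $\lambda_2 \in (\mu_{12}, 1/\mu_{21})$.
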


\begin{proof}
\label{proof:lin_ind}

Suppose by contradiction that there exists $\seti{\lambda_i}$ such that $\sum_{j \ne i}  \mu_{ij}   \lambda_j /  \lambda_i < 1$, but $T_i \cap \sum_{j \ne i} T_j \ne \{ 0 \}$. Then there exists $\seti{ N_i \in T_i }$ such that $\sumi \lambda_i N_i = 0$ and not all $N_i$ are zero. But this leads to a contradiction because for $i = 1, \hdots, L$,
\begin{align*}
		\dnormi{ N_i }{i} &= \dnormi{ - \sum_{j \ne i} \frac{ \lambda_j }{ \lambda_i } N_j } {i}
		\\
		&\le \sum_{j \ne i} \frac{ \lambda_j }{ \lambda_i } \mu_{ij} \dnormi{N_j } {j}
		\numberthis \label{prop1_inc_prop}
		\\
		&\le (\sum_{j \ne i} \frac{ \lambda_j }{ \lambda_i } \mu_{ij})  \max_{j \ne i} \dnormi{N_j } {j}
		\numberthis \label{prop1_holder}
		\\
		&<   \max_{j \ne i} \dnormi{N_j } {j}
		\numberthis \label{prop1_assump}
\end{align*}
where we have used equation~(\ref{eq:inc_prop}) for the first inequality \eqref{prop1_inc_prop}, Holder's inequality for second inequality \eqref{prop1_holder} and $\sum_{j \ne i}  \mu_{ij}   \lambda_j /  \lambda_i < 1$ for the last inequality. \revised{Hence, none of $\seti{\dnormi{N_i}{i}}$ is the largest of the set, which is a contradiction.}

\end{proof}

Our next theorem shows an optimality condition of the convex program (\ref{eq:cvx}) in terms of its dual solution.

\begin{theorem} [Lemma 4.2~\cite{Anonymous:2013fx}] \label{thm:opt}

$\seti{ X_i } $  is the unique minimizer of the convex program (\ref{eq:cvx}) if there exists a matrix $\dual$ such that for $i = 1, \hdots, L$, 
\begin{enumerate}
	\item $ \proj_{T_i} ( \dual ) = \lambda_i \atom_i$
	\item $ \dnormi{ \proj_{T_i^\perp} (\dual ) } {i} < \lambda_i $
\end{enumerate}

\end{theorem}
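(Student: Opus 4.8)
The plan is to prove this via the standard dual-certificate argument, adapted to the multi-scale setting. Let $\seti{X_i}$ be a feasible point and $\seti{X_i + \Delta_i}$ be any other feasible point, so that the perturbations satisfy $\sumi \Delta_i = 0$. I want to show that the objective strictly increases unless every $\Delta_i = 0$. Using convexity of each $\normi{\cdot}{i}$ together with the subdifferential characterization~(\ref{eq:subgrad_c}), for each scale $i$ I can pick a subgradient of the form $\atom_i + W_i$ with $W_i \in T_i^\perp$ and $\dnormi{W_i}{i} \le 1$, and then $\normi{X_i + \Delta_i}{i} \ge \normi{X_i}{i} + \inner{\atom_i + W_i}{\Delta_i}$. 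The freedom in choosing $W_i$ is crucial: by the variational/duality characterization of $\normi{\cdot}{i}$ I can select $W_i$ so that $\inner{W_i}{\Delta_i} = \dnormi{\proj_{T_i^\perp}(\Delta_i)}{i}$ (choosing $W_i$ supported on $T_i^\perp$ to align with $\proj_{T_i^\perp}(\Delta_i)$ and normalized so its dual norm is at most $1$), or more simply I only need $\inner{W_i}{\Delta_i} \ge \normi{\proj_{T_i^\perp}(\Delta_i)}{i} - $ (a harmless slack); the clean version uses the supporting-hyperplane form of the norm on $T_i^\perp$.

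Summing the scaled inequalities gives
\begin{equation}
	\begin{aligned}
		\sumi \lambda_i \normi{X_i + \Delta_i}{i} \ge \sumi \lambda_i \normi{X_i}{i} + \sumi \lambda_i \inner{\atom_i}{\Delta_i} + \sumi \lambda_i \inner{W_i}{\Delta_i}.
	\end{aligned}
\end{equation}
The key algebraic move is to insert the certificate $\dual$. Since $W_i \in T_i^\perp$, we have $\inner{W_i}{\Delta_i} = \inner{W_i}{\proj_{T_i^\perp}(\Delta_i)}$, and since $\lambda_i \atom_i = \proj_{T_i}(\dual)$ by condition~(1), we can write $\sumi \lambda_i \inner{\atom_i}{\Delta_i} = \sumi \inner{\proj_{T_i}(\dual)}{\Delta_i} = \sumi \inner{\dual}{\proj_{T_i}(\Delta_i)}$. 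Now exploit feasibility: $\sumi \Delta_i = 0$, hence $\sumi \inner{\dual}{\Delta_i} = 0$, which lets me rewrite $\sumi \inner{\dual}{\proj_{T_i}(\Delta_i)} = -\sumi \inner{\dual}{\proj_{T_i^\perp}(\Delta_i)}$. Substituting, the cross terms collapse to
\begin{equation}
	\begin{aligned}
		\sumi \lambda_i \inner{\atom_i}{\Delta_i} + \sumi \lambda_i \inner{W_i}{\Delta_i} = \sumi \inner{\lambda_i W_i - \proj_{T_i^\perp}(\dual)}{\proj_{T_i^\perp}(\Delta_i)}.
	\end{aligned}
\end{equation}
Choosing each $W_i$ (as above) so that $\inner{W_i}{\proj_{T_i^\perp}(\Delta_i)} = \dnormi{\proj_{T_i^\perp}(\Delta_i)}{i}$ while bounding $\inner{\proj_{T_i^\perp}(\dual)}{\proj_{T_i^\perp}(\Delta_i)} \le \dnormi{\proj_{T_i^\perp}(\dual)}{i} \normi{\proj_{T_i^\perp}(\Delta_i)}{i}$ via the norm/dual-norm inequality, each summand is at least $(\lambda_i - \dnormi{\proj_{T_i^\perp}(\dual)}{i}) \normi{\proj_{T_i^\perp}(\Delta_i)}{i}$, which is strictly positive by condition~(2) whenever $\proj_{T_i^\perp}(\Delta_i) \ne 0$.

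This shows $\sumi \lambda_i \normi{X_i + \Delta_i}{i} \ge \sumi \lambda_i \normi{X_i}{i}$, with equality only if $\proj_{T_i^\perp}(\Delta_i) = 0$ for every $i$, i.e. $\Delta_i \in T_i$ for all $i$. To upgrade this to \emph{strict} optimality (uniqueness) I then invoke Proposition~\ref{prop:lin_ind}: the hypothesis of the theorem as used downstream guarantees the $\seti{T_i}$ are independent, so $\sumi \Delta_i = 0$ with each $\Delta_i \in T_i$ forces every $\Delta_i = 0$. (If one wants the theorem to stand alone without assuming independence of the $T_i$, the statement should be read as uniqueness among decompositions, or the independence must be folded in as a hypothesis — this is exactly how it is applied in the next theorem.) The main obstacle is the bookkeeping around the choice of the subgradient terms $W_i$: one must verify that the variational characterization of the block-wise nuclear norm genuinely permits selecting $W_i \in T_i^\perp$ achieving $\inner{W_i}{\proj_{T_i^\perp}(\Delta_i)} = \normi{\proj_{T_i^\perp}(\Delta_i)}{i}$ blockwise and simultaneously across all blocks of $\part_i$ — this follows because the blocks partition the index set and the dual norm is a blockwise maximum, so the construction decouples over blocks, but it is the step that needs care.
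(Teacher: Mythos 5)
Your argument is correct and is essentially the paper's own proof: the same dual-certificate computation in which one picks a subgradient $\atom_i + W_i$ with $W_i \in T_i^\perp$ aligned to $\proj_{T_i^\perp}(\Delta_i)$, inserts $\dual$ using $\sumi \Delta_i = 0$ and condition (1) to cancel the $T_i$-components, and closes with the norm/dual-norm inequality and condition (2) (modulo your one-off slip writing the dual norm where the primal norm $\normi{\proj_{T_i^\perp}(\Delta_i)}{i}$ is meant, which your final bound gets right). Your closing remark about the residual case $\Delta_i \in T_i$ for all $i$ — where strictness needs the independence of the $T_i$ from Proposition~\ref{prop:lin_ind} (available where the theorem is applied) — is a point the paper's proof passes over silently, so flagging it is a genuine, if minor, improvement rather than a deviation.
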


\begin{proof}
 \label{proof:opt}

Consider any non-zero perturbation $\seti{\Delta_i}$ to $\seti{X_i}$ such that $\seti{X_i + \Delta_i}$ stays in the feasible set, that is $\sumi{\Delta_i} = 0$. We will show that $ \obj{ X_i + \Delta_i }  > \obj{ X_i }$.

We first decompose $\Delta_i$ into orthogonal parts with respect to $T_i$, that is, $\Delta_i =  \proj_{T_i}( \Delta_i ) + \proj_{T_i^\perp}( \Delta_i ) $. We also consider a specific subgradient $G = \transpose{  [ G_1 \cdots G_\levels ] }$ of $\sumi \lambda_i \normi{ \cdot }{i}$ at $\seti{ X_i }$ such that $\dnormi{ \proj_{T_i^\perp} ( G_i ) } {i} \le \lambda$,  and  $\inner{ \proj_{T_i^\perp}( \Delta_i ) }{ \proj_{T_i^\perp} ( G_i ) } = \lambda_i \normi{ \proj_{T_i^\perp}( \Delta_i ) }{i} $. \revised{\reviewercommentnum{R1.6} Then, from the definition of subgradient and the fact that $\sumi{\Delta_i} = 0$, we have,}
\begin{align*}
		 \obj{ X_i + \Delta_i }		&\ge \obj{ X_i } + \inner{ \Delta_i } { G_i } \\
		 					&= \obj{ X_i } + \inner{ \Delta_i } { G_i } - \inner{ \Delta_i } { \dual } 
\end{align*}
\revised{Applying the orthogonal decomposition with respect to $T_i$ and using $\proj_{T_i}(G_i) = \proj_{T_i}(Q) = \lambda_i E_i$, we have,}
\begin{align*}
		 \obj{ X_i + \Delta_i }		&\ge  \obj{ X_i } + \inner{ \proj_{T_i^\perp}( \Delta_i ) }{ \proj_{T_i^\perp} ( G_i ) }   \\
							      &~~~- \inner{ \proj_{T_i^\perp}( \Delta_i ) }{ \proj_{T_i^\perp} ( \dual ) }   \\
\end{align*}
\revised{Using Holder's inequality and the assumption for the subgradient $G_i$, we obtain,}
\begin{align*}
		 \obj{ X_i + \Delta_i }		&\ge  \obj{ X_i } +  \lambda_i \normi{ \proj_{T_i^\perp}( \Delta_i) }{i} \\
							&~~~  -    \dnormi{ \proj_{T_i^\perp}( \dual) }{i}  \normi{ \proj_{T_i^\perp}( \Delta_i) }{i} \\
							&> \obj{ X_i }
\end{align*}

\end{proof}

With Proposition~\ref{prop:lin_ind} and Theorem~\ref{thm:opt}, we are ready to prove Theorem~\ref{thm:main}.

\begin{proof} [Proof of Theorem~\ref{thm:main}]
\label{proof:main}

Since $\sum_{j \ne i}  \mu_{ij}   \lambda_j / \lambda_i < 1/2$, by Proposition~\ref{prop:lin_ind},  $T_i \cap \sum_{j \ne i} T_j = \{ 0 \}$ for all $i$. Thus, there is a unique matrix $Q$ in $\sumi T_i$ such that $\proj_{T_i} ( Q ) = \lambda_i \atom_i$. In addition, $Q$ can be uniquely expressed as a sum of elements in $T_i$. That is, $Q = \sumi Q_i$ with $Q_i \in T_i$. We now have a matrix $\dual$ that satisfies the first optimality condition. In the following, we will show that it also satisfies the second optimality condition $ \dnormi{ \proj_{T_i^\perp} { \dual } } {i} < \lambda_i $.

If the vector spaces $\seti{T_i}$ are orthogonal, then $Q_i$ is exactly $\lambda_i \atom_i$. Because they are not necessarily orthogonal, we express $Q_i$ as $ \lambda_i \atom_i$ plus a correction term $\lambda_i \epsilon_i$. That is, we express $Q_i = \lambda_i (\atom_i + \epsilon_i)$. Putting $Q_i$'s back to $Q$, we have
\begin{align*}
 		Q = \sumi \lambda_i (\atom_i + \epsilon_i)
	\numberthis \label{eq:q}
\end{align*}

\revised{\reviewercommentnum{R1.6} Combining the above equation \eqref{eq:q} with the first optimality condition \eqref{thm:opt}, $\proj_{T_i} ( Q ) = \lambda_i \atom_i $, we have $  \sum_{j=1}^L \lambda_j \proj_{T_i}(\atom_j + \epsilon_j) = \lambda_i \atom_i$. Since $\proj_{T_i} (\atom_i + \epsilon_i) = \atom_i + \epsilon_i$, rearranging the equation, we obtain the following recursive expression for $\epsilon_i$:}
\begin{equation}
	\begin{aligned}
		  \epsilon_i &= -  \proj_{T_i} \left ( \sum_{j \ne i} \frac{\lambda_j}{ \lambda_i}  (\atom_j  + \epsilon_j ) \right )
	\end{aligned}
	\label{eq:eps_recur}
\end{equation}
We now obtain a bound on $ \dnormi{  \proj_{T_i^\perp} ( Q ) }{i}$ in terms of $\epsilon_i$.
\begin{align*}
		\dnormi{  \proj_{T_i^\perp} ( Q )} {i}&= \dnormi{ \proj_{T_i^\perp} ( \sum_{j \ne i} \lambda_j  (\atom_j + \epsilon_j  )) }{i} 
		\\
		&\le \dnormi{  \sum_{j \ne i} \lambda_j  (\atom_j + \epsilon_j ) }{i} \numberthis \label{thm2_proj_bound}
		\\
		&\le   \sum_{j \ne i}  \mu_{ij}   \lambda_j ( 1 + \epsi{j})		\numberthis \label{thm2_inc_prop}
		\\
		&\le   (\sum_{j \ne i}  \mu_{ij}   \lambda_j  ) \max_{j \ne i} ( 1 + \epsi{j})	\numberthis \label{thm2_holder}
\end{align*}
where we obtain equation \eqref{thm2_proj_bound} from equation \eqref{eq:norm_ineq2}, equation \eqref{thm2_inc_prop} from equation \eqref{eq:inc_prop} and the last inequality \eqref{thm2_holder} from Holder's inequality.

Similarly, we obtain a recursive expression for $1 + \epsi{i}$ using equation~(\ref{eq:eps_recur})
\begin{align*}
		1 + \epsi{i}  &=  1 +  \dnormi{   \proj_{T_i} ( \sum_{j \ne i}  \frac{\lambda_j}{ \lambda_i}  (\atom_j  + \epsilon_j ) ) } {i}
		\\
		&\le 1 + \dnormi{  \sum_{j \ne i} \frac{\lambda_j}{ \lambda_i}   (\atom_j + \epsilon_j ) }{i} 	\numberthis \label{thm2_proj_bound2}
		\\
		&\le  1 +  \sum_{j \ne i}  \mu_{ij}   \frac{\lambda_j}{ \lambda_i}   ( 1 + \epsi{j})			\numberthis \label{thm2_inc_prop2}
		\\
		&\le  1 +  (\sum_{j \ne i}  \mu_{ij}   \frac{\lambda_j}{ \lambda_i} ) \max_{j \ne i}  ( 1 + \epsi{j})	\numberthis \label{thm2_holder2}
\end{align*}
where we obtain equation \eqref{thm2_proj_bound2} from equation \eqref{eq:norm_ineq1}, equation \eqref{thm2_inc_prop2} from equation \eqref{eq:inc_prop} and the last inequality \eqref{thm2_holder2} from Holder's inequality.

Taking the maximum over $i$ on both sides and rearranging, we have
\begin{align*}
		\max_i (1 + \epsi{i}) \le  \frac{ 1 } { 1 -   \max_i \sum_{j \ne i}  \mu_{ij}   \frac{\lambda_j}{ \lambda_i}  } 
\end{align*}
Putting the bound back to equation (\ref{thm2_holder}) , we obtain
\begin{equation}
	\begin{aligned}
		\dnormi{ \proj_{T_i^\perp} ( Q ) }{i}  &\le  \lambda_i  \frac{ \sum_{j \ne i}  \mu_{ij}   \frac{\lambda_j}{ \lambda_i}  } { 1 -   \max_i \sum_{j \ne i}  \mu_{ij}   \frac{\lambda_j}{ \lambda_i} }\\
		&<  \lambda_i  
		\end{aligned}
	\label{eq:done}
\end{equation}
where we used $\sum_{j \ne i}  \mu_{ij}   \lambda_j / \lambda_i < 1 / 2$ in the last inequality.

Thus, we have constructed a dual certificate $\dual$ that satisfies the optimality conditions (\ref{thm:opt}) and $\{ X_i \}_{i=1}^L$ is the unique optimizer of the convex problem (\ref{eq:cvx}).

\end{proof}

\section{Proof of Theorem~\ref{thm:noisy}}
\label{appendix:approximate}

In this section, we provide a proof of Theorem~\ref{thm:noisy}, showing that as long as we can choose our regularization parameters accordingly, we obtain a solution from the convex program~\eqref{eq:noisy} that is close to the ground truth $\seti{X_i}$.

We will begin by proving a technical lemma collecting three inequalities. Throughout the section, we will assume $X_\noise$ is non-zero for simplicity, so that the subgradient of $\fro{X_\noise}$ is exactly $X_\noise / \fro{X_\noise}$.

\begin{lemma}
For $i = 1, \hdots, L$, the following three inequalities hold,
\begin{align*}
\normi{X_i}{i} -  \normi{ X_i + \Delta_i }{i} &\le   \normi{ \proj_{T_i} (\Delta_i) }{i} - \normi{  \proj_{T_i^\perp} (\Delta_i) }{i}  \numberthis \label{lemma1.1} \\
\sumi \lambda_i \normi{  \proj_{T_i^\perp} (\Delta_i) }{i}  &\le  3 \sumi \lambda_i \normi{ \proj_{T_i} (\Delta_i) }{i}  \numberthis \label{lemma1.2} \\
\normi{ \proj_{T_i} (\Delta_i) }{i} &\le \sqrt{ 2 \sumb \rank_b} \fro{ \Delta_i } \numberthis \label{lemma1.3}
\end{align*}
\label{lemma1}

\end{lemma}

\begin{proof} We will prove the inequalities in order.

Let us choose  a subgradient $G_i = E_i + W_i$ of $\normi{X_i}{i}$ at $X_i$ such that $\inner{W_i }{ \proj_{T_i^\perp} (\Delta_i)} = \normi{ \proj_{T_i^\perp} ( \Delta_i ) }{i}$. Then, from the definition of the subgradient, we have,
\begin{align*}
	\normi{ X_i + \Delta_i }{i} &\ge \normi{X_i}{i} + \inner{ G_i} {\Delta_i }\\
	&=  \normi{X_i}{i} +  \inner{  E_i} { \proj_{T_i} (\Delta_i) } +  \normi{  \proj_{T_i^\perp} (\Delta_i) }{i} \\
	&\ge  \normi{X_i}{i} -  \normi{ \proj_{T_i} (\Delta_i) }{i} +  \normi{  \proj_{T_i^\perp} (\Delta_i) }{i} \numberthis \label{lemma_holder}
\end{align*}
where we used Holder's inequality for the last inequality \eqref{lemma_holder}. Re-arranging, we obtain the first result \eqref{lemma1.1}.

For the second inequality, we note that since $\sumi X_i + \Delta_i + X_\noise + \Delta_\noise = Y$, we have $\Delta_\noise = -\sumi \Delta_i$. From the definition of subgradient, we obtain,
\begin{align*}
\lambda_\noise \fro{ X_\noise + \Delta_\noise }  
 &\ge \lambda_\noise \fro{X_\noise } +  \lambda_\noise \inner{ \frac{X_\noise}{ \fro{X_\noise }}  }{\Delta_\noise}\\
 &= \lambda_\noise \fro{X_\noise } - \sumi \lambda_\noise \inner{ \frac{X_\noise}{ \fro{X_\noise }}  }{\Delta_i}\\
&\ge \lambda_\noise \fro{ X_\noise}  - \sumi \lambda_\noise \frac{\dnormi{X_\noise}{i}}{\fro{X_\noise } } \normi{\Delta_i}{i} \numberthis \label{lemma1.2.1}\\
&\ge \lambda_\noise \fro{ X_\noise }  - \sumi \frac{\lambda_i}{2}   \normi{\Delta_i}{i}  \numberthis \label{lemma1.2.2} \\
&\ge \lambda_\noise \fro{ X_\noise }  - \sumi \frac{\lambda_i}{2} \normi{ \proj_{T_i} (\Delta_i) }{i} \\
&- \frac{\lambda_i}{2}   \normi{  \proj_{T_i^\perp} (\Delta_i) }{i}  \numberthis \label{lemma1.2.3}
\end{align*}
where we obtain equation \eqref{lemma1.2.1} from Holder's inequality, equation \eqref{lemma1.2.2} from the condition of $\lambda_i$ \eqref{lambda_condition} and equation \eqref{lemma1.2.3} from the triangle inequality.

Since $\seti{X_i + \Delta_i}$ and $X_\noise + \Delta_\noise$ achieves the minimum objective function, we have,
\begin{align*}
& \lambda_\noise \fro{ X_\noise }  + \sumi \lambda_i \normi{ X_i }{i}\\
&\ge \lambda_\noise \fro{ X_\noise + \Delta_\noise }+ \sumi \lambda_i \normi{ X_i + \Delta_i }{i}   
\end{align*}
Substituting equation \eqref{lemma_holder} and \eqref{lemma1.2.3}, we obtain,
\begin{align*}
& \lambda_\noise \fro{ X_\noise }  + \sumi \lambda_i \normi{ X_i }{i}\\
&\ge \lambda_\noise \fro{ X_\noise }  - \sumi \frac{\lambda_i}{2} \normi{ \proj_{T_i} (\Delta_i) }{i} - \frac{\lambda_i}{2}   \normi{  \proj_{T_i^\perp} (\Delta_i) }{i}  \numberthis \\
&+  \sumi  \lambda_i \normi{X_i}{i} -  \lambda_i \normi{ \proj_{T_i} (\Delta_i) }{i} +  \lambda_i \normi{  \proj_{T_i^\perp} (\Delta_i) }{i} 
\end{align*}

Cancelling and re-arranging, we obtain the desired inequality \eqref{lemma1.2} ,
\begin{align*}
\sumi \lambda_i \normi{  \proj_{T_i^\perp} (\Delta_i) }{i}  &\le  3 \sumi \lambda_i \normi{ \proj_{T_i} (\Delta_i) }{i} 
\end{align*}

For the third inequality, recall that for any rank-$r$ matrix $X$, its nuclear norm $\nuc{X}$ is upper bounded by $\sqrt{r} \fro{X}$. Moreover, the projection of any matrix $Y$ to the column and row space $T$ of a rank $r$ matrix is at most rank-$2r$, that is $\text{rank}(\proj_T (Y)) \le 2r$. Hence, we obtain,
\begin{align*}
\normi{ \proj_{T_i} (\Delta_i) }{i}
&= \sumb \nuc{ R_b ( \proj_{T_i} (\Delta_i) ) } \\
&\le \sumb \sqrt{ 2 r_b } \fro{ R_b (\Delta_i ) }\\
&\le \sqrt{ \sumb 2 r_b } \fro{ \Delta_i }
\end{align*}
where the last inequality follows from Cauchy-Schwatz inequality and the fact that $\sumb \fro{ R_b (\Delta_i ) }^2 = \fro{\Delta_i}^2$.
\end{proof}

With these three inequalities, we now proceed to prove Theorem~\ref{thm:noisy}.

\begin{proof}[Proof of Theorem~\ref{thm:noisy}]

From the optimality of $\seti{ X_i + \Delta_i }$, we have the following inequality,
\begin{align*}
		&\lambda_\noise \fro{ X_\noise + \Delta_\noise }+ \sumi \lambda_i \normi{ X_i + \Delta_i }{i} \\
		&\le \lambda_\noise \fro{ X_\noise } + \sumi \lambda_i \normi{ X_i }{i}
\end{align*}
Re-arranging, we obtain,
\begin{align*}
		 &\fro{ X_\noise + \Delta_\noise } -  \fro{ X_\noise } \\
		 &\le  \frac{1}{\lambda_\noise} \sumi \lambda_i \left( \normi{ X_i }{i}-    \normi{ X_i + \Delta_i }{i} \right)
\end{align*}

For convenience, let us define $\Reg_T = \sumi \lambda_i \normi{ \proj_{T_i} (\Delta_i) }{i}$ and $\Reg_{T^\perp} = \sumi \lambda_i  \normi{  \proj_{T_i^\perp} (\Delta_i) }{i}$. Then, from Lemma \ref{lemma1} equation \eqref{lemma1.1}, we obtain,
\begin{align*}
		 \fro{ X_\noise + \Delta_\noise } -  \fro{ X_\noise } &\le  \frac{1}{\lambda_\noise} (\Reg_T - \Reg_{T^\perp}) \numberthis \label{thm3.1}
\end{align*}

We would like to keep only $\Delta_\noise$ on the left hand side and cancel $X_\noise$. To do this, we multiply both sides of equation \eqref{thm3.1} with $ \fro{ X_\noise + \Delta_\noise } +  \fro{ X_\noise }$. 
Then, using $(a+b)(a-b) = a^2 - b^2$, we expand the left hand side as:
\begin{align*}
	\text{L.H.S.}
		&= \fro{ X_\noise + \Delta_\noise }^2 -  \fro{ X_\noise }^2  \\
		&=	  \fro{ \Delta_\noise }^2  +  2 \inner{X_\noise}{  \Delta_\noise } 
\end{align*}
Recall that $\Delta_\noise = -\sumi \Delta_i$, we obtain the following lower bound for the left hand side:
\begin{align*}
	\text{L.H.S.}
		&=	  \fro{ \Delta_\noise }^2  -  2 \inner{X_\noise}{  \sumi \Delta_i }   \\
		&\ge	  \fro{ \Delta_\noise }^2  -  2  \sumi \dnormi{X_\noise}{i}  \normi{ \Delta_i }{i}  \numberthis \label{thm3.2}  \\
		&\ge	  \fro{ \Delta_\noise }^2  -   \frac{\fro{ X_\noise }}{\lambda_\noise} \sumi \lambda_i  \normi{ \Delta_i }{i}  \numberthis \label{thm3.3} \\
		&\ge	  \fro{ \Delta_\noise }^2  -   \frac{\fro{ X_\noise }}{\lambda_\noise} (\Reg_T + \Reg_{T^\perp})  \numberthis \label{thm3_lower}
\end{align*}
where we used Holder's inequality for equation \eqref{thm3.2}, the condition for $\lambda_i$ for equation \eqref{thm3.3}, and the triangle inequality for \eqref{thm3_lower}.

We now turn to upper bound the right hand side. We know $ \fro{ X_\noise + \Delta_\noise } \le \fro{ X_\noise }  + (\Reg_T - \Reg_{T^\perp}) / \lambda_\noise$ from equation \eqref{thm3.1}. Hence, we obtain,
\begin{align*}
		\text{R.H.S.}
		&= (\fro{ X_\noise + \Delta_\noise } +  \fro{ X_\noise }) \frac{1}{\lambda_\noise} (\Reg_T - \Reg_{T^\perp}) \\
		&\le ( 2 \fro{ X_\noise } + \frac{1}{\lambda_\noise} (\Reg_T - \Reg_{T^\perp})) \frac{1}{\lambda_\noise} (\Reg_T - \Reg_{T^\perp}) 
\end{align*}

Using Lemma \ref{lemma1} equation \eqref{lemma1.2}, we have,
\begin{align*}
		\text{R.H.S.}
		&\le 2   \frac{\fro{ X_\noise }}{\lambda_\noise} (\Reg_T - \Reg_{T^\perp}) + \frac{1 }{\lambda_\noise^2} (\Reg_T  - \Reg_{T^\perp})^2\\
		&\le 2   \frac{\fro{ X_\noise }}{\lambda_\noise} (\Reg_T - \Reg_{T^\perp}) + 16 \frac{1 }{\lambda_\noise^2} \Reg_T^2
		\numberthis \label{thm3_upper}
\end{align*}

Combining and simplifying the lower bound \eqref{thm3_lower} and the upper bound \eqref{thm3_upper}, we obtain,
\begin{align*}
		  \fro{ \Delta_\noise }^2  
		&\le 3  \frac{\fro{ X_\noise }}{\lambda_\noise}   \Reg_T + 16 \frac{1  }{\lambda_\noise^2} \Reg_T^2
		\numberthis \label{thm3_prelim_lower_bound}
\end{align*}

We will now lower bound $\fro{ \Delta_\noise }^2 = \fro{ \sumi \Delta_i }^2$ by individual terms:
\begin{align*}
		\fro{ \sumi \Delta_i }^2 
		&= \sumi \fro{  \Delta_i }^2  + \inner{\Delta_i }{  \sum_{j \ne i}  \Delta_j } \\
		&\ge \sumi \fro{  \Delta_i }^2  - \normi{ \Delta_i }{i}  \sum_{j \ne i}  \dnormi{ \Delta_j }{i} 
\end{align*}
where we used Holder's inequality for the last inequality.

Now, using the assumption that both $\dnormi{X_j}{i}$ and $\dnormi{X_j + \Delta_j}{i}$ are bounded by $\alpha_{ij}$. We have,
\begin{align*}
		\fro{ \sumi \Delta_i }^2 
		&\ge \sumi \fro{  \Delta_i }^2  - \normi{ \Delta_i }{i}  \sum_{j \ne i} 2 \alpha_{ij} \\
		&\ge \sumi \fro{  \Delta_i }^2  -  \lambda_i  \normi{ \Delta_i }{i}   \\
		&\ge \sumi \fro{  \Delta_i }^2  - \Reg_T - \Reg_{T^\perp}  \numberthis \label{thm3_triangle}\\
		&\ge \sumi \fro{  \Delta_i }^2  - 4\Reg_T  \numberthis \label{thm3_tt}
\end{align*}
where we used the triangle inequality for equation \eqref{thm3_triangle} and Lemma \ref{lemma1} equation \eqref{lemma1.2} for equation \eqref{thm3_tt}.

Substituting the lower bound back to equation \eqref{thm3_prelim_lower_bound}, we have
\begin{align*}
		  \sumi \fro{  \Delta_i }^2
		&\le (3 \frac{ \fro{ X_\noise}}{\lambda_\noise}  + 4) \Reg_T + 16 \frac{1  }{\lambda_\noise^2} \Reg_T^2 \numberthis \label{thm3_almost}
\end{align*}

We now turn to upper bound the equation. From Lemma \ref{lemma1} equation \eqref{lemma1.3}, we know that $\Reg_T \le \sumi \lambda_i \sqrt{ 2 \sumb \rank_b}  \fro{  \Delta_i }$. Hence, we have,
\begin{align*}
16 \Reg_T^2 &\le 16 \left( \sumi \lambda_i \sqrt{ 2 \sumb \rank_b}  \fro{  \Delta_i } \right)^2
\\
&\le 16 \left( \sumi \lambda_i^2 2 \sumb \rank_b \right)   \sumi \fro{  \Delta_i }^2
\numberthis \label{eq:1}
\\
&\le \frac{1}{2} \lambda_\noise^2 \sumi \fro{\Delta_i}^2
\numberthis \label{eq:2}
\end{align*}
where we used Cauchy-Schwartz's inequality for equation \eqref{eq:1} and the condition for $\lambda_\noise$ for equation \eqref{eq:2}.
Hence, substituting back to equation \eqref{thm3_almost}, rearranging and ignoring constants, we have,
\begin{align*}
		\sumi \fro{  \Delta_i }^2 
		&\lesssim \frac{ \fro{ X_\noise } }{ \lambda_\noise} \sumi \lambda_i \sqrt{ \sumb \rank_b}  \fro{  \Delta_i } 
\end{align*}

Completing the squares with respect to $\fro{\Delta_i}$ gives us,
\begin{align*}
		&\sumi \left( \fro{  \Delta_i } -  \frac{ \fro{ X_\noise } }{ \lambda_\noise}  \lambda_i \sqrt{ \sumb \rank_b} \right)^2 
		\\
		&\lesssim \sumi \left( \frac{ \fro{ X_\noise } }{ \lambda_\noise}  \lambda_i \sqrt{ \sumb \rank_b} \right)^2
\end{align*}

Using the triangle inequality to lower bound the left hand side, we obtain
\begin{align*}
		&\sumi  \fro{  \Delta_i } -  \frac{ \fro{ X_\noise } }{ \lambda_\noise}  \lambda_i \sqrt{ \sumb \rank_b} \\
		&\lesssim  \sqrt{\sumi \left( \frac{ \fro{ X_\noise } }{ \lambda_\noise}  \lambda_i \sqrt{ \sumb \rank_b} \right)^2}
\end{align*}

Using the fact that $\ell 1$-norm is larger than the $\ell 2$-norm, and re-arranging give us the desired result,
\begin{align*}
		  \sumi \fro{  \Delta_i }
		&\lesssim  \frac{\fro{ X_\noise }}{\lambda_\noise}  \sumi \lambda_i \sqrt{\sumb \rank_b }
\end{align*}

\end{proof}

\end{appendices}


\endgroup

\bibliography{paper}

\begin{thebibliography}{10}
\providecommand{\url}[1]{#1}
\csname url@samestyle\endcsname
\providecommand{\newblock}{\relax}
\providecommand{\bibinfo}[2]{#2}
\providecommand{\BIBentrySTDinterwordspacing}{\spaceskip=0pt\relax}
\providecommand{\BIBentryALTinterwordstretchfactor}{4}
\providecommand{\BIBentryALTinterwordspacing}{\spaceskip=\fontdimen2\font plus
\BIBentryALTinterwordstretchfactor\fontdimen3\font minus
  \fontdimen4\font\relax}
\providecommand{\BIBforeignlanguage}[2]{{%
\expandafter\ifx\csname l@#1\endcsname\relax
\typeout{** WARNING: IEEEtran.bst: No hyphenation pattern has been}%
\typeout{** loaded for the language `#1'. Using the pattern for}%
\typeout{** the default language instead.}%
\else
\language=\csname l@#1\endcsname
\fi
#2}}
\providecommand{\BIBdecl}{\relax}
\BIBdecl

\bibitem{Mallat:1989be}
\BIBentryALTinterwordspacing
S.~G. Mallat, ``{A theory for multiresolution signal decomposition: the wavelet
  representation},'' \emph{IEEE Transactions on Pattern Analysis and Machine
  Intelligence}, 1989. [Online]. Available:
  \url{http://ieeexplore.ieee.org/xpls/abs_all.jsp?arnumber=192463}
\BIBentrySTDinterwordspacing

\bibitem{Candes:2006bs}
\BIBentryALTinterwordspacing
E.~Candes, L.~Demanet, D.~Donoho, and L.~Ying, ``{Fast discrete curvelet
  transforms},'' \emph{Multiscale Modeling {\&} Simulation}, 2006. [Online].
  Available: \url{http://epubs.siam.org/doi/abs/10.1137/05064182X}
\BIBentrySTDinterwordspacing

\bibitem{Simoncelli:1992fh}
\BIBentryALTinterwordspacing
E.~P. Simoncelli, W.~T. Freeman, E.~H. Adelson, and D.~J. Heeger,
  ``\BIBforeignlanguage{English}{{Shiftable multiscale transforms}},''
  \emph{\BIBforeignlanguage{English}{IEEE Transactions on Information Theory}},
  vol.~38, no.~2, pp. 587--607, Mar. 1992. [Online]. Available:
  \url{http://ieeexplore.ieee.org/lpdocs/epic03/wrapper.htm?arnumber=119725}
\BIBentrySTDinterwordspacing

\bibitem{Donoho:2006el}
\BIBentryALTinterwordspacing
D.~L. Donoho and I.~M. Johnstone, ``{Ideal spatial adaptation by wavelet
  shrinkage},'' \emph{Biometrika}, vol.~81, no.~3, pp. 1--31, Sep. 2006.
  [Online]. Available:
  \url{http://www.jstor.org/stable/2337118?origin=crossref}
\BIBentrySTDinterwordspacing

\bibitem{Donoho:2006ci}
\BIBentryALTinterwordspacing
D.~L. Donoho, ``\BIBforeignlanguage{English}{{Compressed sensing}},''
  \emph{\BIBforeignlanguage{English}{IEEE Transactions on Information Theory}},
  vol.~52, no.~4, pp. 1289--1306, Apr. 2006. [Online]. Available:
  \url{http://ieeexplore.ieee.org/lpdocs/epic03/wrapper.htm?arnumber=1614066}
\BIBentrySTDinterwordspacing

\bibitem{Candes:2006eq}
\BIBentryALTinterwordspacing
E.~J. Candes, J.~Romberg, and T.~Tao, ``\BIBforeignlanguage{English}{{Robust
  uncertainty principles: exact signal reconstruction from highly incomplete
  frequency information}},'' \emph{\BIBforeignlanguage{English}{IEEE
  Transactions on Information Theory}}, vol.~52, no.~2, pp. 489--509, 2006.
  [Online]. Available:
  \url{http://ieeexplore.ieee.org/lpdocs/epic03/wrapper.htm?arnumber=1580791}
\BIBentrySTDinterwordspacing

\bibitem{Donoho:2001hn}
\BIBentryALTinterwordspacing
D.~L. Donoho and X.~Huo, ``\BIBforeignlanguage{English}{{Uncertainty principles
  and ideal atomic decomposition}},'' \emph{\BIBforeignlanguage{English}{IEEE
  Transactions on Information Theory}}, vol.~47, no.~7, pp. 2845--2862, Nov.
  2001. [Online]. Available:
  \url{http://ieeexplore.ieee.org/lpdocs/epic03/wrapper.htm?arnumber=959265}
\BIBentrySTDinterwordspacing

\bibitem{Starck:2005be}
\BIBentryALTinterwordspacing
J.~L. Starck, Y.~Moudden, J.~Bobin, M.~Elad, and D.~L. Donoho, ``{Morphological
  component analysis},'' \emph{Optics {\&} Photonics 2005}, vol. 5914, pp.
  59\,140Q--59\,140Q--15, Aug. 2005. [Online]. Available:
  \url{http://proceedings.spiedigitallibrary.org/proceeding.aspx?articleid=870615}
\BIBentrySTDinterwordspacing

\bibitem{Chen:2006hm}
\BIBentryALTinterwordspacing
S.~S. Chen, D.~L. Donoho, and M.~A. Saunders,
  ``\BIBforeignlanguage{English}{{Atomic decomposition by basis pursuit}},''
  \emph{\BIBforeignlanguage{English}{SIAM Journal on Scientific Computing}},
  vol.~20, no.~1, pp. 33--61, 1998. [Online]. Available:
  \url{http://epubs.siam.org/doi/abs/10.1137/S1064827596304010}
\BIBentrySTDinterwordspacing

\bibitem{Liang:2007gf}
\BIBentryALTinterwordspacing
Z.~P. Liang, ``{Spatiotemporal imaging with partially separable functions},''
  in \emph{The Proceedings of the International Symposium of Biomedical
  Imaging: From Nano to Macro}.\hskip 1em plus 0.5em minus 0.4em\relax IEEE,
  2007, pp. 988--991. [Online]. Available:
  \url{http://ieeexplore.ieee.org/lpdocs/epic03/wrapper.htm?arnumber=4193454}
\BIBentrySTDinterwordspacing

\bibitem{Basri:2003ie}
\BIBentryALTinterwordspacing
R.~Basri and D.~W. Jacobs, ``\BIBforeignlanguage{English}{{Lambertian
  reflectance and linear subspaces}},'' \emph{\BIBforeignlanguage{English}{IEEE
  Transactions on Pattern Analysis and Machine Intelligence}}, vol.~25, no.~2,
  pp. 218--233, Feb. 2003. [Online]. Available:
  \url{http://ieeexplore.ieee.org/lpdocs/epic03/wrapper.htm?arnumber=1177153}
\BIBentrySTDinterwordspacing

\bibitem{Goldberg:2001hj}
\BIBentryALTinterwordspacing
K.~Goldberg, T.~Roeder, D.~Gupta, and C.~Perkins,
  ``\BIBforeignlanguage{English}{{Eigentaste: A Constant Time Collaborative
  Filtering Algorithm}},'' \emph{\BIBforeignlanguage{English}{Information
  Retrieval}}, vol.~4, no.~2, pp. 133--151, 2001. [Online]. Available:
  \url{http://link.springer.com/10.1023/A:1011419012209}
\BIBentrySTDinterwordspacing

\bibitem{Fazel:2002tc}
\BIBentryALTinterwordspacing
M.~Fazel, ``{Matrix rank minimization with applications},'' Ph.D. dissertation,
  Stanford University, Mar. 2002. [Online]. Available:
  \url{https://faculty.washington.edu/mfazel/thesis-final.pdf}
\BIBentrySTDinterwordspacing

\bibitem{Candes:2009kj}
\BIBentryALTinterwordspacing
E.~J. Cand{\`e}s and B.~Recht, ``\BIBforeignlanguage{English}{{Exact Matrix
  Completion via Convex Optimization}},''
  \emph{\BIBforeignlanguage{English}{Foundations of Computational
  Mathematics}}, vol.~9, no.~6, pp. 717--772, Dec. 2009. [Online]. Available:
  \url{http://link.springer.com/10.1007/s10208-009-9045-5}
\BIBentrySTDinterwordspacing

\bibitem{Fazel:2013dc}
\BIBentryALTinterwordspacing
M.~Fazel, T.~K. Pong, D.~Sun, and P.~Tseng,
  ``\BIBforeignlanguage{English}{{Hankel Matrix Rank Minimization with
  Applications to System Identification and Realization}},''
  \emph{\BIBforeignlanguage{English}{SIAM Journal on Matrix Analysis and
  Applications}}, vol.~34, no.~3, pp. 946--977, Jul. 2013. [Online]. Available:
  \url{http://epubs.siam.org/doi/abs/10.1137/110853996}
\BIBentrySTDinterwordspacing

\bibitem{Candes:2013ka}
\BIBentryALTinterwordspacing
E.~J. Cand{\`e}s, Y.~C. Eldar, T.~Strohmer, and V.~Voroninski,
  ``\BIBforeignlanguage{English}{{Phase Retrieval via Matrix Completion}},''
  \emph{\BIBforeignlanguage{English}{SIAM Journal on Imaging Sciences}},
  vol.~6, no.~1, pp. 199--225, Feb. 2013. [Online]. Available:
  \url{http://epubs.siam.org/doi/abs/10.1137/110848074}
\BIBentrySTDinterwordspacing

\bibitem{Anonymous:2011kn}
\BIBentryALTinterwordspacing
V.~Chandrasekaran, S.~Sanghavi, and P.~A. Parrilo, ``{Rank-sparsity incoherence
  for matrix decomposition},'' \emph{SIAM Journal on Optimization}, vol.~21,
  no.~2, pp. 572--596, 2011. [Online]. Available:
  \url{http://epubs.siam.org/doi/abs/10.1137/090761793}
\BIBentrySTDinterwordspacing

\bibitem{Agarwal:2012gc}
\BIBentryALTinterwordspacing
A.~Agarwal, S.~Negahban, and M.~J. Wainwright,
  ``\BIBforeignlanguage{English}{{Noisy matrix decomposition via convex
  relaxation: Optimal rates in high dimensions}},''
  \emph{\BIBforeignlanguage{English}{The Annals of Statistics}}, vol.~40,
  no.~2, pp. 1171--1197, Apr. 2012. [Online]. Available:
  \url{http://projecteuclid.org/euclid.aos/1342625465}
\BIBentrySTDinterwordspacing

\bibitem{Coifman:1995ji}
\BIBentryALTinterwordspacing
R.~R. Coifman and D.~L. Donoho,
  ``\BIBforeignlanguage{English}{{Translation-Invariant De-Noising}},'' in
  \emph{\BIBforeignlanguage{English}{Wavelets and Statistics}}.\hskip 1em plus
  0.5em minus 0.4em\relax New York, NY: Springer New York, 1995, pp. 125--150.
  [Online]. Available:
  \url{http://link.springer.com/10.1007/978-1-4612-2544-7_9}
\BIBentrySTDinterwordspacing

\bibitem{Recht:2010dx}
\BIBentryALTinterwordspacing
B.~Recht, W.~Xu, and B.~Hassibi, ``\BIBforeignlanguage{English}{{Null space
  conditions and thresholds for rank minimization}},''
  \emph{\BIBforeignlanguage{English}{Mathematical Programming}}, vol. 127,
  no.~1, pp. 175--202, Oct. 2010. [Online]. Available:
  \url{http://link.springer.com/10.1007/s10107-010-0422-2}
\BIBentrySTDinterwordspacing

\bibitem{Xu:2010uc}
\BIBentryALTinterwordspacing
H.~Xu, C.~Caramanis, and S.~Sanghavi, ``{Robust PCA via outlier pursuit},''
  \emph{Advances in Neural Information Processing Systems 23}, pp. 2496--2504,
  2010. [Online]. Available:
  \url{http://papers.nips.cc/paper/4005-robust-pca-via-outlier-pursuit}
\BIBentrySTDinterwordspacing

\bibitem{Candes:2010jb}
\BIBentryALTinterwordspacing
E.~J. Candes and T.~Tao, ``\BIBforeignlanguage{English}{{The Power of Convex
  Relaxation: Near-Optimal Matrix Completion}},''
  \emph{\BIBforeignlanguage{English}{IEEE Transactions on Information Theory}},
  vol.~56, no.~5, pp. 2053--2080, May 2010. [Online]. Available:
  \url{http://ieeexplore.ieee.org/lpdocs/epic03/wrapper.htm?arnumber=5452187}
\BIBentrySTDinterwordspacing

\bibitem{Recht:2010ht}
\BIBentryALTinterwordspacing
B.~Recht, M.~Fazel, and P.~A. Parrilo, ``{Guaranteed minimum-rank solutions of
  linear matrix equations via nuclear norm minimization},'' \emph{SIAM Review},
  2010. [Online]. Available:
  \url{http://epubs.siam.org/doi/abs/10.1137/070697835}
\BIBentrySTDinterwordspacing

\bibitem{Hsu:2011di}
\BIBentryALTinterwordspacing
D.~Hsu, S.~M. Kakade, and T.~Zhang, ``\BIBforeignlanguage{English}{{Robust
  Matrix Decomposition With Sparse Corruptions}},''
  \emph{\BIBforeignlanguage{English}{IEEE Transactions on Information Theory}},
  vol.~57, no.~11, pp. 7221--7234, 2011. [Online]. Available:
  \url{http://ieeexplore.ieee.org/lpdocs/epic03/wrapper.htm?arnumber=5934412}
\BIBentrySTDinterwordspacing

\bibitem{Candes:2011bd}
\BIBentryALTinterwordspacing
E.~J. Cand{\`e}s, X.~Li, Y.~Ma, and J.~Wright,
  ``\BIBforeignlanguage{English}{{Robust principal component analysis?}}''
  \emph{\BIBforeignlanguage{English}{Journal of the ACM}}, vol.~58, no.~3, pp.
  1--37, May 2011. [Online]. Available:
  \url{http://portal.acm.org/citation.cfm?doid=1970392.1970395}
\BIBentrySTDinterwordspacing

\bibitem{McCoy:2013vc}
\BIBentryALTinterwordspacing
M.~B. McCoy and J.~A. Tropp, ``{The achievable performance of convex
  demixing},'' \emph{arXiv preprint. arXiv:1309.7478v1}, Sep. 2013. [Online].
  Available: \url{http://arxiv.org/abs/1309.7478v1}
\BIBentrySTDinterwordspacing

\bibitem{McCoy:2014ir}
\BIBentryALTinterwordspacing
------, ``\BIBforeignlanguage{English}{{Sharp recovery bounds for convex
  demixing, with applications}},''
  \emph{\BIBforeignlanguage{English}{Foundations of Computational
  Mathematics}}, vol.~14, no.~3, pp. 1--51, Jan. 2014. [Online]. Available:
  \url{http://link.springer.com/10.1007/s10208-014-9191-2}
\BIBentrySTDinterwordspacing

\bibitem{McCoy:ky}
\BIBentryALTinterwordspacing
M.~B. McCoy, V.~Cevher, Q.~T. Dinh, and A.~Asaei, ``{Convexity in source
  separation: Models, geometry, and algorithms},'' \emph{IEEE Signal Processing
  Magazine}, vol.~31, no.~3, pp. 87--95, 2014. [Online]. Available:
  \url{http://ieeexplore.ieee.org/xpls/abs_all.jsp?arnumber=6784106}
\BIBentrySTDinterwordspacing

\bibitem{Bakshi:1998il}
\BIBentryALTinterwordspacing
B.~R. Bakshi, ``\BIBforeignlanguage{English}{{Multiscale PCA with application
  to multivariate statistical process monitoring}},''
  \emph{\BIBforeignlanguage{English}{AIChE Journal}}, vol.~44, no.~7, pp.
  1596--1610, Jul. 1998. [Online]. Available:
  \url{http://doi.wiley.com/10.1002/aic.690440712}
\BIBentrySTDinterwordspacing

\bibitem{Kondor:2014vr}
\BIBentryALTinterwordspacing
R.~Kondor, N.~Teneva, and V.~Garg, ``{Multiresolution Matrix Factorization},''
  in \emph{The 31st International Conference on Machine Learning}, 2014, pp.
  1620--1628. [Online]. Available:
  \url{http://jmlr.org/proceedings/papers/v32/kondor14.html}
\BIBentrySTDinterwordspacing

\bibitem{Kakarala:2001fk}
\BIBentryALTinterwordspacing
R.~Kakarala and P.~O. Ogunbona, ``{Signal analysis using a multiresolution form
  of the singular value decomposition},'' \emph{Image Processing}, 2001.
  [Online]. Available:
  \url{http://ieeexplore.ieee.org/xpls/abs_all.jsp?arnumber=918566}
\BIBentrySTDinterwordspacing

\bibitem{VOZALIS:2007ki}
\BIBentryALTinterwordspacing
M.~Vozalis and K.~Margaritis, ``\BIBforeignlanguage{English}{{Using SVD and
  demographic data for the enhancement of generalized Collaborative
  Filtering}},'' \emph{\BIBforeignlanguage{English}{Information Sciences}},
  vol. 177, no.~15, pp. 3017--3037, Aug. 2007. [Online]. Available:
  \url{http://linkinghub.elsevier.com/retrieve/pii/S0020025507001223}
\BIBentrySTDinterwordspacing

\bibitem{Anonymous:2013fx}
\BIBentryALTinterwordspacing
J.~Wright, A.~Ganesh, K.~Min, and Y.~Ma, ``{Compressive principal component
  pursuit},'' \emph{Information and Inference: A Journal of the IMA}, 2013.
  [Online]. Available:
  \url{http://imaiai.oxfordjournals.org/content/2/1/32.short}
\BIBentrySTDinterwordspacing

\bibitem{Foygel:cl}
\BIBentryALTinterwordspacing
R.~Foygel and L.~Mackey, ``{Corrupted sensing: Novel guarantees for separating
  structured signals},'' \emph{IEEE Transaction on Information Theory},
  vol.~60, no.~2, pp. 1223--1247, 2014. [Online]. Available:
  \url{http://ieeexplore.ieee.org/xpls/abs_all.jsp?arnumber=6712045}
\BIBentrySTDinterwordspacing

\bibitem{Bandeira:2014wk}
\BIBentryALTinterwordspacing
A.~S. Bandeira and R.~van Handel, ``{Sharp nonasymptotic bounds on the norm of
  random matrices with independent entries},'' \emph{arXiv preprint.
  arXiv:1408.6185v3}, Aug. 2014. [Online]. Available:
  \url{http://arxiv.org/abs/1408.6185v3}
\BIBentrySTDinterwordspacing

\bibitem{Belloni:2011wd}
\BIBentryALTinterwordspacing
A.~Belloni, V.~Chernozhukov, and L.~Wang, ``{Square-root lasso: pivotal
  recovery of sparse signals via conic programming},'' \emph{Biometrika}, 2011.
  [Online]. Available:
  \url{http://biomet.oxfordjournals.org/content/98/4/791.short}
\BIBentrySTDinterwordspacing

\bibitem{Watson:1992gq}
\BIBentryALTinterwordspacing
G.~A. Watson, ``{Characterization of the subdifferential of some matrix
  norms},'' \emph{Linear Algebra and Its Applications}, 1992. [Online].
  Available:
  \url{http://www.sciencedirect.com/science/article/pii/0024379592904072}
\BIBentrySTDinterwordspacing

\bibitem{Boyd:2011bw}
\BIBentryALTinterwordspacing
S.~Boyd, ``{Distributed Optimization and Statistical Learning via the
  Alternating Direction Method of Multipliers},'' \emph{Foundations and
  Trends{\textregistered} in Machine Learning}, vol.~3, no.~1, pp. 1--122,
  2010. [Online]. Available:
  \url{https://web.stanford.edu/~boyd/papers/pdf/admm_distr_stats.pdf}
\BIBentrySTDinterwordspacing

\bibitem{Parikh:2013vb}
\BIBentryALTinterwordspacing
N.~Parikh and S.~Boyd, ``{Proximal algorithms},'' \emph{Foundations and Trends
  in optimization}, 2013. [Online]. Available:
  \url{http://web.stanford.edu/~boyd/papers/pdf/prox_algs.pdf}
\BIBentrySTDinterwordspacing

\bibitem{Horn:2012tf}
\BIBentryALTinterwordspacing
R.~A. Horn and C.~R. Johnson, \emph{\BIBforeignlanguage{English}{{Matrix
  Analysis}}}.\hskip 1em plus 0.5em minus 0.4em\relax Cambridge University
  Press, Oct. 2012. [Online]. Available:
  \url{https://books.google.com/books?id=5I5AYeeh0JUC}
\BIBentrySTDinterwordspacing

\bibitem{Figueiredo:2003gd}
\BIBentryALTinterwordspacing
M.~A.~T. Figueiredo and R.~D. Nowak, ``\BIBforeignlanguage{English}{{An EM
  algorithm for wavelet-based image restoration}},''
  \emph{\BIBforeignlanguage{English}{IEEE Transactions on Image Processing}},
  vol.~12, no.~8, pp. 906--916, Aug. 2003. [Online]. Available:
  \url{http://ieeexplore.ieee.org/lpdocs/epic03/wrapper.htm?arnumber=1217267}
\BIBentrySTDinterwordspacing

\bibitem{Georghiades:LXuok52i}
\BIBentryALTinterwordspacing
A.~S. Georghiades, P.~N. Belhumeur, and D.~J. Kriegman, ``{From few to many:
  illumination cone models for face recognition under variable lighting and
  pose},'' \emph{IEEE Transactions on Pattern Analysis and Machine
  Intelligence}, vol.~23, no.~6, pp. 643--660, 2001. [Online]. Available:
  \url{http://ieeexplore.ieee.org/lpdocs/epic03/wrapper.htm?arnumber=927464}
\BIBentrySTDinterwordspacing

\bibitem{Li:2004fk}
\BIBentryALTinterwordspacing
L.~Li, W.~Huang, I.~Y.~H. Gu, and Q.~Tian,
  ``\BIBforeignlanguage{English}{{Statistical Modeling of Complex Backgrounds
  for Foreground Object Detection}},'' \emph{\BIBforeignlanguage{English}{IEEE
  Transactions on Image Processing}}, vol.~13, no.~11, pp. 1459--1472, Nov.
  2004. [Online]. Available:
  \url{http://ieeexplore.ieee.org/lpdocs/epic03/wrapper.htm?arnumber=1344037}
\BIBentrySTDinterwordspacing

\bibitem{Lustig:2007cu}
\BIBentryALTinterwordspacing
M.~Lustig, D.~Donoho, and J.~M. Pauly, ``\BIBforeignlanguage{English}{{Sparse
  MRI: The application of compressed sensing for rapid MR imaging}},''
  \emph{\BIBforeignlanguage{English}{Magnetic Resonance in Medicine}}, vol.~58,
  no.~6, pp. 1182--1195, 2007. [Online]. Available:
  \url{http://doi.wiley.com/10.1002/mrm.21391}
\BIBentrySTDinterwordspacing

\bibitem{Zhang:2015dva}
\BIBentryALTinterwordspacing
T.~Zhang, J.~Y. Cheng, A.~G. Potnick, R.~A. Barth, M.~T. Alley, M.~Uecker,
  M.~Lustig, J.~M. Pauly, and S.~S. Vasanawala,
  ``\BIBforeignlanguage{English}{{Fast pediatric 3D free-breathing abdominal
  dynamic contrast enhanced MRI with high spatiotemporal resolution.}}''
  \emph{\BIBforeignlanguage{English}{Journal of Magnetic Resonance Imaging}},
  vol.~41, no.~2, pp. 460--473, Feb. 2015. [Online]. Available:
  \url{http://doi.wiley.com/10.1002/jmri.24551}
\BIBentrySTDinterwordspacing

\bibitem{Otazo:2014it}
\BIBentryALTinterwordspacing
R.~Otazo, E.~Cand{\`e}s, and D.~K. Sodickson,
  ``\BIBforeignlanguage{English}{{Low-rank plus sparse matrix decomposition for
  accelerated dynamic MRI with separation of background and dynamic
  components.}}'' \emph{\BIBforeignlanguage{English}{Magnetic Resonance in
  Medicine}}, vol.~73, no.~3, pp. 1125--1136, Mar. 2015. [Online]. Available:
  \url{http://eutils.ncbi.nlm.nih.gov/entrez/eutils/elink.fcgi?dbfrom=pubmed&id=24760724&retmode=ref&cmd=prlinks}
\BIBentrySTDinterwordspacing

\end{thebibliography}

\begin{IEEEbiography}[{\includegraphics[width=1in,height=1.25in,clip,keepaspectratio]{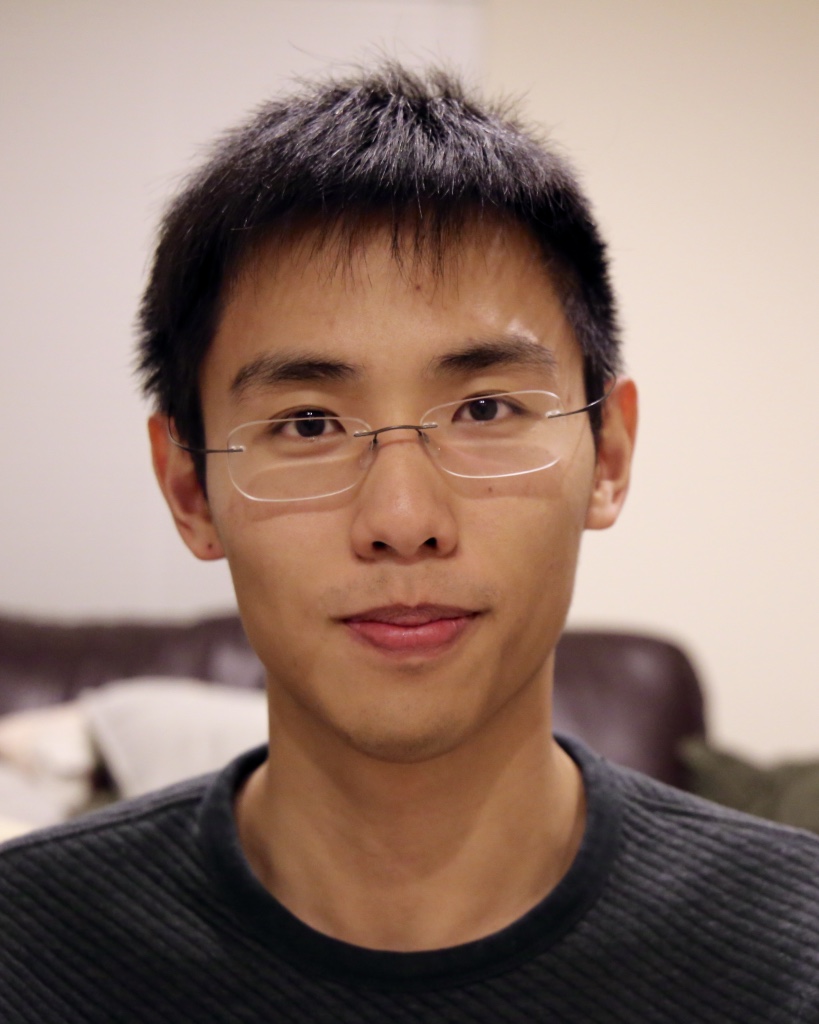}}]{Frank Ong} is currently pursuing the Ph.D. degree in the department of Electrical Engineering and Computer Sciences at the University of California, Berkeley. He received his B.Sc. at the same department in 2013. His research interests lie in signal processing, medical image reconstruction, compressed sensing and low rank methods.
\end{IEEEbiography}

\begin{IEEEbiography}[{\includegraphics[width=1in,height=1.25in,clip,keepaspectratio]{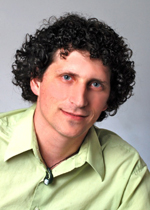}}]{Michael (Miki) Lustig} is an Associate Professor in EECS. He joined the faculty of the EECS Department at UC Berkeley in Spring 2010. He received his B.Sc. in Electrical Engineering from the Technion, Israel Institute of Technology in 2002. He received his Msc and Ph.D. in Electrical Engineering from Stanford University in 2004 and 2008, respectively. His research focuses on medical imaging, particularly Magnetic Resonance Imaging (MRI), and very specifically, the application of compressed sensing to rapid and high-resolution MRI, MRI pulse sequence design, medical image reconstruction, inverse problems in medical imaging and sparse signal representation.
\end{IEEEbiography}

\end{document}